\documentclass[a4paper,10pt]{article}
\setlength{\textwidth}{16.5cm} \setlength{\textheight}{24.5cm}
\oddsidemargin=-0.2cm \topmargin=-1.5cm

\usepackage{amsthm,amsmath,amssymb}
\usepackage{microtype}
\usepackage{paralist}
\usepackage{charter,eulervm}
\usepackage{subfig}
\usepackage{tabularx}
\usepackage[pdftex,breaklinks,colorlinks,
linkcolor=blue,
citecolor=blue,
urlcolor=blue]{hyperref}
\usepackage{tkz-graph}
\usetikzlibrary{arrows,shapes,decorations.pathmorphing}

\theoremstyle{definition} 
\newtheorem{theorem}{Theorem}[section]
\newtheorem{lemma}[theorem]{Lemma}
\newtheorem{corollary}[theorem]{Corollary}
\newtheorem{proposition}[theorem]{Proposition}
\newtheorem{definition}{Definition}
\newtheorem{kernelrule}{Reduction}
\newtheorem{claim}{Claim}
\newtheorem{remark}[theorem]{Remark}
\def\boxit#1{\vbox{\hrule\hbox{\vrule\kern4pt
      \vbox{\kern1pt#1\kern1pt} \kern2pt\vrule}\hrule}}

\newcommand{\first}[1]{\mathtt{first}(#1)}
\newcommand{\last}[1]{\mathtt{last}(#1)}
\newcommand{\stpath}[2]{$#1$-$#2$ path}
\newcommand{\stsep}[2]{$#1$-$#2$ separator}
\newcommand{\cT}{\ensuremath{{\cal T}}}
\title{Chordal Editing is Fixed-Parameter Tractable\footnote{Supported
    by the European Research Council (ERC) grant 280152 and the
    Hungarian Scientific Research Fund (OTKA) grant NK105645.  A
    preliminary version of this paper appeared in the proceedings of
    STACS 2014.}}

\author{\sc Yixin Cao\thanks{Institute for Computer Science and
    Control, Hungarian Academy of Sciences,
    Email:\href{mailto:yixin@sztaki.hu}{\tt
      yixin@sztaki.hu}, \href{mailto:dmarx@cs.bme.hu}{\tt
      dmarx@cs.bme.hu}.} \addtocounter{footnote}{-1} \and D\'aniel
  Marx\footnotemark}

\begin{document}
\maketitle
\begin{abstract}
  Graph modification problems are typically asked as follows: is there
  a small set of operations that transforms a given graph to have a
  certain property.  The most commonly considered operations include
  vertex deletion, edge deletion, and edge addition; for the same
  property, one can define significantly different versions by
  allowing different operations.  We study a very general graph
  modification problem which allows all three types of operations:
  given a graph $G$ and integers $k_1$, $k_2$, and $k_3$, the
  \textsc{chordal editing} problem asks whether $G$ can be transformed
  into a chordal graph by at most $k_1$ vertex deletions, $k_2$ edge
  deletions, and $k_3$ edge additions.  Clearly, this problem
  generalizes both \textsc{chordal vertex/edge deletion} and
  \textsc{chordal completion} (also known as \textsc{minimum
    fill-in}).  Our main result is an algorithm for \textsc{chordal
    editing} in time $2^{O(k\log k)}\cdot n^{O(1)}$, where
  $k:=k_1+k_2+k_3$ and $n$ is the number of vertices of $G$.
  Therefore, the problem is fixed-parameter tractable parameterized by
  the total number of allowed operations.  Our algorithm is both more
  efficient and conceptually simpler than the previously known
  algorithm for the special case \textsc{chordal deletion}.
 \end{abstract}

\section{Introduction} 
A graph is chordal if it contains no hole, that is, an induced cycle
of at least four vertices.  After more than half century of intensive
investigation, the properties and the recognition of chordal graphs
are well understood.  Their natural structure earns them wide
applications, some of which might not seem to be related to graphs at
first sight.  During the study of Gaussian elimination on sparse
positive definite matrices, Rose
\cite{rose-70-fill-in,rose-72-sparse-matrix} formulated the
\textsc{chordal completion} problem, which asks for the existence of a
set of at most $k$ edges whose insertion makes a graph chordal, and
showed that it is equivalent to \textsc{minimum fill-in}.  Balas and
Yu \cite{balas-86-find-clique-by-maximal-chordal-subgraph} proposed a
heuristics algorithm for the maximum clique problem by first finding a
maximum spanning chordal subgraph (see also
\cite{xue-94-chordel-edge-deletion}).  This is equivalent to the
\textsc{chordal edge deletion} problem, which asks for the existence
of a set of at most $k$ edges whose deletion makes a graph chordal.
Dearing et al.~\cite{dearing-88-maximal-chordal-subgraphs} observed
that a maximum spanning chordal subgraph can also be used to find
maximum independent set and sparse matrix completion.  This
observation turns out to be archetypal: many NP-hard problems
(coloring, maximum clique, etc.) are known to be solvable in
polynomial time when restricted to chordal graphs, and hence admit a
similar heuristics algorithm.  

Cai \cite{cai-03-parameterized-coloring} extended this to the exact
setting.  He studied the coloring problems on graphs close to certain
graph classes.  In particular, he asked the following question: given
a chordal graph $G$ on $n$ vertices with $k$ additional edges (or
vertices), can we find a minimum coloring for $G$ in $f(k)\cdot
n^{O(1)}$ time?  The edge version was resolved by Marx
\cite{marx-param-chordal-full} affirmatively.  His algorithm needs as
part of the input the additional edges; to find them is equivalent to
solving the \textsc{chordal edge deletion} problem.  One may observe
that though with slightly different purpose, the inspiration behind
\cite{balas-86-find-clique-by-maximal-chordal-subgraph,dearing-88-maximal-chordal-subgraphs}
and \cite{cai-03-parameterized-coloring} are exactly the same.

All aforementioned three modification problems , unfortunately but
understandably, are NP-hard
\cite{yannakakis-81-minimum-fill-in,natanzon-01-edge-modification,
  krishnamoorthy-79-node-deletion,lewis-80-node-deletion-np}.
Therefore, early work of Kaplan et
al.~\cite{kaplan-99-chordal-completion} and
Cai~\cite{cai-96-hereditary-graph-modification} focused on their
parameterized complexity, and proved that that the \textsc{chordal
  completion} problem is fixed-parameter tractable.  Recall that a
problem, parameterized by $k$, is {\em fixed-parameter tractable
  (FPT)} parameterized by $k$ if there is an algorithm with runtime
$f(k)\cdot n^{O(1)}$, where $f$ is a computable function depending
only on $k$ \cite{downey-fellows-99}.  Marx
\cite{marx-10-chordal-deletion} showed that the complementary deletion
problems, both edge and vertex versions, are also FPT.  Here we
consider the generalized \textsc{chordal editing} problem that
combines all three operations: can a graph be made chordal by deleting
at most $k_1$ vertices, deleting at most $k_2$ edges, and adding at
most $k_3$ edges.  On the formulation we have two quick remarks.
First, it does not make sense to add new vertices, as chordal graphs
are hereditary (i.e., any induced subgraph of a chordal graph is
chordal).  Second, the budgets for different operations are not
transferable, as otherwise it degenerates to \textsc{chordal vertex
  deletion}.  Our main result establishes the fixed-parameter
tractability of \textsc{chordal editing} parameterized by $k:=k_1 +
k_2 + k_3$.
\begin{theorem}[{\bf Main result}]\label{thm:alg-chordal-editing}
  There is a $2^{{O}(k \log{k})}\cdot n^{O(1)}$-time algorithm for
  deciding, given an $n$-vertex graph $G$, whether there are a set
  $V_-$ of at most $k_1$ vertices, a set $E_-$ of at most $k_2$
  edges, and a set $E_+$ of at most $k_3$ non-edges, such that the
  deletion of $V_-$ and $E_-$ and the addition of $E_+$ make $G$ a
  chordal graph.
\end{theorem}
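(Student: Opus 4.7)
The plan is to follow the bounded-search-tree paradigm used for chordal vertex/edge deletion and minimum fill-in, and to extend it to accommodate all three operations simultaneously. The algorithm repeatedly finds an obstruction---a hole, i.e., an induced cycle of length at least four---in $G$ and branches over the possible ways a solution $(V_-,E_-,E_+)$ can destroy it: either a vertex of the hole is in $V_-$, an edge of the hole is in $E_-$, or a chord of the hole is in $E_+$. For a hole of length $\ell$ this yields $O(\ell^2)$ branches, and in each branch one of $k_1,k_2,k_3$ strictly decreases.

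The easy regime is when $G$ contains a hole of length $\ell = k^{O(1)}$. Then the branching factor at each node is $k^{O(1)}$, the branching depth is bounded by $k=k_1+k_2+k_3$, and the total size of the search tree is $k^{O(k)}=2^{O(k\log k)}$, which matches the claimed bound. The hard case, and the main obstacle of the proof, is the complementary regime where every hole of $G$ is long (longer than any prescribed polynomial in $k$). Here one cannot afford to branch on all chords or vertices of such a hole directly.

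For the long-hole regime the plan is to exploit near-chordality structurally. In a chordal graph every minimal separator is a clique, and performing at most $k$ vertex/edge modifications can only perturb this picture in a localized way. The key technical step would be a structural lemma: whenever every hole of $G$ is sufficiently long, one can compute in polynomial time a family of $k^{O(1)}$ candidate minimal separators such that any chordal editing solution must operate nontrivially on at least one of them---for instance, complete it into a clique by additions from $E_+$, or remove one of its vertices via $V_-$. Guessing such a separator together with the way it is modified yields a $k^{O(1)}$-way branching, after which the problem decomposes into two smaller instances glued along the separator; recursion preserves the $2^{O(k\log k)}\cdot n^{O(1)}$ budget.

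Finally, I would assemble the two branching rules into a single search tree: at each node, if a short hole is present, branch as in the easy regime; otherwise invoke the structural lemma and branch on candidate separators, recursing into the pieces of the decomposition. Careful bookkeeping ensures that in every branch at least one of $k_1,k_2,k_3$ decreases, so the depth stays bounded by $k$. The step most likely to be delicate is the structural lemma itself: allowing simultaneous edge additions and deletions blurs the distinction between ``old'' and ``new'' minimal separators that was central in the corresponding step of Marx's \textsc{chordal deletion} algorithm, and one needs a careful invariant linking the near-chordal structure of $G$ to the clique tree of the target chordal graph in order to keep the number of candidate separators polynomial in $k$.
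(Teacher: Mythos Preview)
Your outline matches the paper in the easy regime (short holes: branch on vertices, edges, and chords), but the long-hole case has a genuine gap: you are missing the iterative compression step, and without it your ``structural lemma'' is not something one knows how to prove.

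The paper does \emph{not} analyse an arbitrary graph in which every hole is long. Instead, it first reduces (by iterative compression) to a \emph{compression} instance: together with $G$ one is handed a hole cover $M$ of size at most $k+1$, so $G-M$ is chordal and has a clique tree $\mathcal{T}$. All of the long-hole machinery is carried out relative to $\mathcal{T}$. A shortest hole $H$ (which must meet $M$) is split by $M$ into at most $|M|$ paths in $G-M$; each path is then decomposed into $O(k^2)$ \emph{segments} using junctions defined via the clique tree and the interaction with $M$. The key structural fact (Theorem~\ref{lem:bound-segments}) bounds the total number of segments by $O(k^3)$, and Claim~\ref{lem:solution-and-segments} shows that some segment must be ``cut'' by the deletion part of any solution. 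The branching is then over segments and over \emph{mixed} $x$--$y$ separators (vertices plus edges) inside the chordal subgraph around a segment, found by a dedicated subroutine (Lemma~\ref{lem:alg-mix-separator}); a replacement argument (Claim~\ref{lem:break-segment-2}) shows any such minimum mixed separator can be swapped into an optimal solution.

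In your plan, by contrast, there is no modulator $M$, hence no chordal skeleton and no clique tree to anchor the counting argument. The assertion that one can compute $k^{O(1)}$ candidate minimal separators of $G$ on which some optimal solution must act is exactly the hard part, and you have not supplied any mechanism for it; ``near-chordality'' alone does not give you a clique tree, and minimal separators of a non-chordal $G$ need not be few or well-structured. Moreover, the paper does not ``decompose into two smaller instances glued along a separator''; it stays with a single instance and decreases $k$ by committing to a concrete mixed separator inside a segment. Your recursive decomposition would also need an argument that the budgets $k_1,k_2,k_3$ can be split across the two pieces without loss, which is nontrivial when added edges in $E_+$ may straddle the separator. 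In short: add iterative compression to obtain the modulator, then rebuild the long-hole analysis on the clique tree of $G-M$; the segment/junction framework and the mixed-separator branching are the missing ingredients.
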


As a corollary, our algorithm implies the {fixed-parameter
  tractability} of \textsc{chordal edge editing}, which allows both
edge operations but not vertex deletions---we can try every
combination of $k_2$ and $k_3$ where $k_2+k_3$ does not exceed the
given bound---resolving an open problem asked by Mancini
\cite{mancini-08-thesis}.  Moreover, we get a new FPT algorithm for
the special case {\sc chordal deletion}, and it is far simpler and
faster than the algorithm of \cite{marx-10-chordal-deletion}.

\subparagraph{Motivation.}  In the last two decades, graph
modification problems have received intensive attention, and promoted
themselves as an independent line of research in both parameterized
computation and graph theory.  For graphs representing experimental
data, the edge additions and deletions are commonly used to model
false negatives and false positives respectively, while vertex
deletions can be viewed as the detection of outliers.  In this
setting, it is unnatural to consider any single type of errors, while
the \textsc{chordal editing} problem formulated above is able to
encompass both positive and negative errors, as well as outliers.  We
hope that it will trigger further studies on editing problems to
related graph classes, especially interval graphs and unit interval
graphs.

Further, since it is generally acknowledged that the study of chordal
graphs motivated the theory of perfect graphs
\cite{hajnal-58-chordal-graphs, 
  berge-67-some-perfect-graphs}, the importance of chordal graphs
merits such a study from the aspect of structural graph theory.

\subparagraph{Related work.}  Observing that a large hole cannot be
fixed by the insertion of a small number of edges, it is easy to
devise a bounded search tree algorithm for the {\sc chordal
  completion} problem \cite{kaplan-99-chordal-completion,
  cai-96-hereditary-graph-modification}.  No such simple argument
works for the deletion versions: the removal of a single vertex/edge
suffices to break a hole of an arbitrary length.  The way Marx
\cite{marx-10-chordal-deletion} showed that this problem is FPT is to
(1) prove that if the graph contains a large clique, then we can
identify an irrelevant vertex whose deletion does not change the
problem; and (2) observe that if the graph has no large cliques, then
it has bounded treewidth, so the problem can be solved by standard
techniques, such as the application of Courcelle's Theorem.  In
contrast, our algorithm uses simple reductions and structural
properties, which reveal a better understanding of the deletion
problems, and easily extend to the more general \textsc{chordal
  editing} problem.

Of all the vertex deletion problems, we would like to single out
\textsc{feedback vertex set}, \textsc{interval vertex deletion}, and
\textsc{unit interval vertex deletion} for a special comparison.
Their commonality with \textsc{chordal vertex deletion} lies in the
fact that the graph classes defining these problems are proper subsets
of chordal graphs, or equivalently, their forbidden subgraphs contain
all holes as a proper subset.  All these problems admit
single-exponential FPT algorithms of runtime $c^k\cdot n^{O(1)}$,
where the constant $c$ is $3.83$ for \textsc{feedback vertex set}
\cite{cao-10-ufvs}, $10$ for \textsc{interval vertex deletion}
\cite{cao-12-interval-deletion}, and $6$ for \textsc{unit interval
  vertex deletion} \cite{villanger-13-pivd}, respectively.  For these
problems, we can dispose of other forbidden subgraphs (i.e.,
triangles, small witnesses for {asteroidal triples}, and claws) first
and their nonexistence simplifies the graph structure and
significantly decrease the possible configurations on which we conduct
branching (all known algorithms use bounded search trees).
Interestingly, \emph{long holes}, the main difficulty of the current
paper, do not bother us at all in the three algorithms mentioned
above.  This partially explains why a $c^k\cdot n^{O(1)}$-time
algorithm for \textsc{chordal vertex deletion} is so elusive.

\subparagraph{Our techniques.}  As a standard opening step, we use the
{iterative compression} method introduced by Reed et
al.~\cite{reed-04-odd-cycle-transversals} and concentrate on the
compression problem.  Given a solution ($V_-, E_-, E_+$), we can
easily find a set $M$ of at most $|V_-|+ |E_-| + |E_+|$ vertices such
that $G - M$ is chordal.  A clique tree decomposition of $G - M$ will
be extensively employed in the compression step,\footnote{Refer to
  Section~\ref{sec:alg-disjoint-hc} for more intuition behind this
  observation.} where short holes can be broken by simple branching,
and the main technical idea appears in the way we break long holes.
We show that a shortest hole $H$ can be decomposed into a bounded
number of segments, where the internal vertices of each segment, as
well as the part of the graph ``close'' to them behave in a
well-structured and simple way with respect to their interaction with
$M$.  To break $H$, we have to break some of the segments, and the
properties of the segments allow us to show that we need to consider
only a bounded number of canonical separators breaking these
segements. Therefore, we can branch on choosing one of these canonical
separators and break the hole using it, resulting in an FPT algorithm.

\subparagraph{Notation.}  All graphs discussed in this paper shall
always be undirected and simple.  A graph $G$ is given by its vertex
set $V(G)$ and edge set $E(G)$.  We use the customary notation $u\sim
v$ to mean $uv\in E(G)$, and by $v \sim X$ we mean that $v$ is
adjacent to at least one vertex in $X$.  Two vertex sets $X$ and $Y$
are {\em completely connected} if $x \sim y$ for each pair of $x \in
X$ and $y \in Y$.  A hole $H$ has the same number of vertices and
edges, denoted by $|H|$.  We use $N_U(v)$ as a shorthand for $N(v)
\cap U$, regardless of whether $v \in U$ or not; moreover, $N_H(v) :=
N_{V(H)}(v)$ for a hole $H$.  A vertex is \emph{simplicial} if $N[v]$
induce a clique.

A set $S$ of vertices is an \emph{\stsep{x}{y}} if $x$ and $y$ belong
to different components in the subgraph $G -S$; it is \emph{minimal}
if no proper subset of $S$ is an \emph{\stsep{x}{y}}.  Moreover, $S$
is a \emph{minimal separator} if there exists some pair of $x,y$ such
that $S$ is a minimal \stsep{x}{y}.  A graph is chordal if and only if
every minimal separator in it induces a clique
\cite{dirac-61-chordal-graphs}.

Let $\cal T$ be a tree whose vertices, called \emph{bags}, correspond to
the maximal cliques of a graph $G$.  With the customary abuse of
notation, the same symbol $K$ is used for a bag in ${\cal T}$ and its
corresponding maximal clique of $G$.  Let ${\cal T}(x)$ denote the
subgraph of ${\cal T}$ induced by all bags containing $x$.  The tree
$\cal T$ is a \emph{clique tree} of $G$ if for any vertex $x \in
V(G)$, the subgraph ${\cal T}(x)$ is connected.  It is known that the
intersection of any pair of adjacent bags $K$ and $K'$ of \cT\ makes a
minimal separator; in particular, it is a separator for any pair of
vertices $x\in K\setminus K'$ and $y\in K'\setminus K$.  A vertex is
simplicial if and only if it belongs to exactly one maximal clique;
thus, any non-simplicial vertex appears in some minimal separator(s)
\cite{lewis-89-reordering-sparse-matrices}.

In a clique tree $\cT$, there is a unique path between each pair of
bags, and its length is called the \emph{distance} of this pair of
bags; the \emph{distance between two subtrees} is defined to be the
shortest distance between each pair of bags from these two subtrees.
By definition, a pair of vertices $u,v$ of $G$ is adjacent if and only
if $\cT(u)$ and ${\cal T}(v)$ intersect.  Given a pair of nonadjacent
vertices $u$ and $v$, there exists a unique path $\cal P =$($K_u$,
$\dots$, $K_v$) connecting $\cT(u)$ and $\cT(v)$, where $K_u$ and
$K_v$ are the only bags that contain $u$ and $v$ respectively.

\section{Outline of the algorithm}
A subset of vertices is called a \emph{hole cover} of $G$ if its
deletion makes $G$ chordal.  We say that ($V_-, E_-, E_+$), where
$V_-\subseteq V(G)$ and $E_-\subseteq E(G)$ and $E_+\subseteq
V(G)^2\setminus E(G)$, is a \emph{chordal editing set} of $G$ if the
deletion of $V_-$ and $E_-$ and the addition of $E_+$, applied
successively, make $G$ chordal.  Its {\em size} is defined to be the
3-tuple ($|V_-|, |E_-|, |E_+|$), and we say that it is {\em smaller}
than ($k_1,k_2,k_3$) if all of $|V_-|\le k_1$ and $|E_-|\le k_2$ and
$|E_+|\le k_3$ hold true and at least one inequality is strict.  Note
that since chordal graphs are hereditary, it does not make sense to
add new vertices.  The main problem studied in the paper is formally
defined as follows.\medskip

\fbox{\parbox{0.9\linewidth}{
  {\sc chordal editing} ($G, k_1,k_2,k_3$)

\begin{tabularx}{\linewidth}{rX}
  \textit{Input:} & A graph $G$ and three nonnegative integers $k_1$,
  $k_2$, and $k_3$.
  \\
  \textit{Task:} & Either construct a chordal editing set
  $(V_-,E_-,E_+)$ of $G$ that has size at most ($k_1,k_2,k_3$), or
  report that no such a set exists.
\end{tabularx}
}}
\medskip

One might be tempted to define the editing problem by imposing a
combined quota on the total number of operations, i.e., a single
parameter $k=k_1+k_2+k_3$, instead of three separate parameters.
However, this formulation is computationally equivalent to
\textsc{chordal vertex deletion} in a trivial sense, as vertex
deletions are clearly preferable to both edge operations.

We use the technique {\em iterative compression}: we define and solve
a compression version of the problem first and argue that this implies
the fixed-parameter tractability of the original problem.  In the
compression problem a hole cover $M$ of bounded size is given in the
input, making the problem somewhat easier: as $G- M$ is chordal, we
have useful structural information about the graph.  Note that the
definition below has a slightly technical (but standard) additional
condition, i.e., we are not allowed to delete a vertex in $M$.
\medskip

\fbox{\parbox{0.9\linewidth}{
  {\sc chordal editing compression} ($G, k_1,k_2,k_3, M$)

\begin{tabularx}{\linewidth}{rX}
\textit{Input:}& A graph $G$, three nonnegative integers $k_1$, $k_2$, 
and $k_3$, and a hole cover $M$ of $G$ whose size is at most $k_1+ k_2 + 
k_3+1$.
  \\
\textit{Task:} & Either construct a chordal editing
  set $(V_-,E_-,E_+)$ of $G$ such that its size is at most
  ($k_1,k_2,k_3$) and $V_-$ is disjoint from $M$, or report that no
  such a set exists.
\end{tabularx}
}
}
\medskip

The hole cover $M$ is called the \emph{modulator} of this instance.
We use $k:= k_1+k_2+k_3$ to denote the total numbers of operations.
The main part of this paper will be focused on an algorithm for {\sc
  chordal editing compression}.  Its outline is described in
Figure~\ref{fig:alg}.  We will endeavor to prove the following
theorem.

\begin{theorem}\label{thm:alg-compression}
  {\sc chordal editing compression} is solvable in time $2^{{O}(k
    \log{k})}\cdot n^{O(1)}$.
\end{theorem}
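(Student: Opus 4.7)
The plan is to design a recursive branching algorithm. At each step, given an instance of {\sc chordal editing compression}, either $G$ is already chordal (and we return the empty editing set) or we identify a bounded set of subinstances to recurse on, each with a strictly smaller total budget $k = k_1 + k_2 + k_3$. A branching factor of $O(k^c)$ with a budget decrease of at least one in every branch yields $(k^c)^k = 2^{O(k\log k)}$ leaves, and combined with polynomial work per node this gives the claimed running time.

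As a preprocessing step, compute a clique tree $\cT$ of the chordal graph $G - M$ and find a shortest hole $H$ of $G$; both can be done in polynomial time. Since $G - M$ is chordal, $V(H) \cap M \neq \emptyset$, so listing the vertices of $V(H) \cap M$ in cyclic order along $H$ partitions $H$ into at most $|M| \le k + 1$ \emph{segments}, each a path whose internal vertices lie in $V(G) \setminus M$. Any valid editing set $(V_-, E_-, E_+)$ must, for at least one segment, either delete one of its internal vertices, delete one of its edges, or add a chord that shortcuts the portion of $H$ passing through it.

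If $|H|$ is bounded by some polynomial $q(k)$, we branch on all $O(q(k))$ natural ways to break it: each non-$M$ vertex of $H$ (charged to $k_1$), each edge of $H$ (charged to $k_2$), or each potentially useful chord (charged to $k_3$). In each branch one of $k_1,k_2,k_3$ decreases by one, which is compatible with the target complexity.

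The difficult case, and the technical heart of the algorithm, is when some segment $\sigma$ of $H$ is long: we cannot afford to branch on every vertex or edge of $\sigma$. Instead, we exploit $\cT$: the segment $\sigma$ traces a long path in $\cT$, and the successive edges of this path determine a sequence of minimal separators of $G - M$, each of which induces a clique. The plan is to prove a structural lemma asserting that only $O(k)$ \emph{canonical} separators along $\sigma$ need to be considered, in the sense that any chordal editing set must ``interact'' with at least one of them, either by deleting a vertex of the separator, by deleting an edge of $G$ incident to it, or by adding a chord crossing it. Granted this lemma, we branch over the at most $|M|$ segments, the $O(k)$ canonical separators of the chosen segment, and the $O(k)$ local operations at the chosen separator, yielding an $O(k^3)$-way branching with budget decrease at least one per branch. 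The main obstacle will be the structural lemma bounding by $O(k)$ the number of canonical separators per segment: this is precisely where the minimality of $H$ and the fine structure of $G$ ``close to $M$'' along the clique tree come into play, since without these a long segment could in principle admit too many non-equivalent separators. Once this lemma is in place, the recursion and complexity analysis are routine.
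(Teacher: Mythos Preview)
Your high-level scaffold (bounded branching tree, budget decrease of one per branch, $2^{O(k\log k)}$ leaves) is fine, and the instinct to use the clique tree of $G-M$ along a long portion of $H$ is right. But the proposal has two substantive gaps that are exactly where the work lies.

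First, your ``segments'' are just the maximal sub-paths of $H$ between consecutive $M$-vertices. That is too coarse. The paper subdivides each such path further into \emph{segments delimited by junctions}: a vertex $v$ of the path is a junction if, roughly, the local neighbourhood of $v$ in $G-M$ touches $M\setminus A_M$ (types (1)--(2)) or witnesses a nearby hole (types (3)--(4)). The real structural lemma (Theorem~\ref{lem:bound-segments}) is that the number of junctions is $O(k^3)$; its proof is a nontrivial counting argument producing many pairwise almost-disjoint holes through a single $M$-vertex if the bound fails. Your sentence ``the main obstacle will be the structural lemma bounding by $O(k)$ the number of canonical separators per segment'' defers precisely the content of the theorem, and your claimed $O(k)$ bound and the vague notion of ``canonical separator'' do not match what is actually needed.

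Second, and more critically, your branching step ``the $O(k)$ local operations at the chosen separator'' is wrong in kind. Breaking a long segment is not achieved by a single vertex or edge deletion near a separator: the solution must \emph{disconnect} the two ends of the segment in a certain induced chordal subgraph, i.e.\ it must contain a \emph{mixed} $v_{s'}$--$v_{t'}$ separator $(V_S,E_S)$ (vertices and edges together). The paper (Lemma~\ref{lem:alg-mix-separator}) gives an FPT routine to enumerate minimum mixed separators of each size profile, and the key missing ingredient in your plan is the \emph{exchange argument} (Lemma~\ref{lem:chordal-protrusion} on simplicial sets and Claim~\ref{lem:break-segment-2}): any minimum mixed separator for the segment can be substituted for the one the optimal solution uses, because the interior of a junction-free segment becomes simplicial once disconnected and hence is untouched by the rest of a minimal solution. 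Without this exchange step, you cannot justify committing to a ``canonical'' break location rather than the unknown one the solution actually uses. Your proposal gives no hint of this mechanism.
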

\begin{figure}[t]
\centering
\fbox{\parbox{0.7\linewidth}{
    \vspace*{-3mm}    
    \begin{enumerate}
      \small
      \itemindent3mm
      \itemsep-1mm
      \setcounter{enumi}{-1}
    \item {\bf return} if $G$ is chordal or one of
      $k_1$, $k_2$, and $k_3$ becomes negative;
    \item find a shortest hole $H$;
    \item {\bf if} $H$ is shorter than $k + 4$ {\bf then} guess a way to
      fix it; {\bf goto} 0.
    \item {\bf else} decompose $H$ into $O(k^3)$ segments;\\
      \hspace*{7ex} guess a segment and break it;
    \item  {\bf goto} 0.
    \end{enumerate}
  }
}
\caption{Outline of our algorithm for \textsc{chordal editing
    compression}.}
\label{fig:alg}
\end{figure}
Let us briefly explain here steps 1 and 2 of the algorithm for {\sc
  chordal editing compression}, while leaving the main technical part,
step 3, for later sections.  We can find in time ${O}(n^3(n+m))$ a
shortest hole $H$ as follows: we guess three consecutive vertices
$\{v_1,v_2,v_3\}$ of $H$, and then search for the shortest
\stpath{v_1}{v_3} in $G - (N[v_2]\setminus \{v_1,v_3\})$.  In order to
destroy a hole $H$, we need to perform at least one of the possible
$|V(H)\setminus M|$ vertex deletions (vertices in $M$ are avoided
here), $|H|$ edge deletions, or $O(|H|^2)$ edge insertions that affect
$H$.  Therefore, if the length of $H$ is no more than $k + 3$, then we
can fix it easily by branching into $O(k^2)$ direction.  Hence we may
assume $|H| \ge k +4 > k_3 + 3$.  Such a hole cannot be fixed with
edge additions only; thus at least one deletion has to occur on this
hole.  As we shall see in Section~\ref{sec:segments}, the hole can be
divided into a bounded number of ``segments'' (paths), of which at
least one needs to be ``broken.''  In our case, breaking a segment
means more than deleting one vertex or edge from it, and it needs a
strange mixed form of separation: we have to separate two vertices by
removing both edges and vertices. We study this notion of mixed
separation on chordal graphs in Section~\ref{sec:mixed-separators}.
Finally, we show in Section~\ref{sec:alg-chordal-editing} that there
is a bounded number of canonical ways of breaking a segment and we may
branch on choosing one segment and one of the canonical ways of
breaking it. This completes the proof of
Theorem~\ref{thm:alg-compression}, which enables us to prove
Theorem~\ref{thm:alg-chordal-editing}.

\begin{figure*}[t]
\setbox4=\vbox{\hsize28pc \noindent\strut
\begin{quote}
  \vspace*{-5mm} \footnotesize {\bf Algorithm chordal-editing}($G,
  k_1, k_2, k_3$)
  \\
  {Input}: a graph $G$ and three nonnegative integers $k_1$, $k_2$,
  and $k_3$.
  \\
  {Output}: a chordal editing set $(V_-,E_-,E_+)$ of $G$ of size at
  most ($k_1,k_2,k_3$), or ``NO.''
  \\[2ex]
  0 \hspace*{2ex} $i := 0$; $V_- := \emptyset$; $E_- := \emptyset$;
  $E_+ := \emptyset$;
  \\
  1 \hspace*{2ex} {\bf if} $i = n$ {\bf then return} ($V_-, E_-,
  E_+$).
  \\
  2 \hspace*{2ex} $X := V_-\cup \{v_{i+1}\}$ and one endpoint (picked
  arbitrarily) from each edge in $E_-\cup E_+$;
  \\
  3 \hspace*{2ex} {\bf for each} $X_-$ of $X$ of size $\le k_1$ {\bf
    do}
  \\
  3.1 \hspace*{4ex} {\bf call} Theorem~\ref{thm:alg-compression} with
  ($G^{i+1} - X_-, k_1 - |X_-|, k_2, k_3, X \setminus X_-$);
  \hfill{\em $M:=X \setminus X_-$ is the modulator.}
  \\
  3.2 \hspace*{4ex} {\bf if} the answer is ($V'_-, E'_-, E'_+$) {\bf
    then}
  \\
  \hspace*{10ex} ($V_-, E_-, E_+$) := ($V'_-\cup X_-, E'_-, E'_+$);
  \\
  \hspace*{10ex} $i := i+1$; {\bf goto} 1;
  \\
  4 \hspace*{2ex} {\bf return ``NO.''} \hfill{\em no subset $X_-$
    works in step 3.}
\end{quote} \vspace*{-3mm} \strut} $$\boxit{\box4}$$
\vspace*{-8mm}
\caption{Algorithm for \textsc{chordal editing}.}
\label{fig:main-alg}
\end{figure*}
\begin{proof}[Proof of Theorem~\ref{thm:alg-chordal-editing}]
  Let $v_1,\dots, v_n$ be an arbitrary ordering of $V(G)$, and let
  $G^i$ be the subgraph induced by the first $i$ vertices.  Note that
  $G^n = G$.  The algorithm described in Figure~\ref{fig:main-alg}
  iteratively finds a chordal editing set of $G^i$ from $i=1$ to $n$;
  the solution for $G^i$ is used in solving $G^{i+1}$.  The algorithm
  maintains as an invariant that ($V_-, E_-, E_+$) is a chordal
  editing set of size at most ($k_1,k_2,k_3$) of $G^i$ for the current
  $i$.  For each $G^i$, note that $|X|\le k+1$, step 3 generates at
  most $2^{O(k)}$ instances of {\sc chordal editing compression}, each
  with parameter at most ($k_1,k_2,k_3$), and thus can be solved in
  $2^{{O}(k \log{k})}\cdot n^{O(1)}$ time.  There are $n$ iterations,
  and the total runtime of the algorithm is thus $2^{{O}(k
    \log{k})}\cdot n^{O(1)}$.
\end{proof}

\section{Segments}\label{sec:segments}
We need to define a hierarchy of vertex sets $V_0, V_1$, and $V_2$.
Each set is a subset of the preceding one, and all of them induce
chordal subgraphs.  Let $A$ denote the set of common neighbors of the
shortest hole $H$ found in step 1 (Figure \ref{fig:alg}), and define
$A_M = A \cap M$ and $A_0 = A \setminus M$.  We can assume that $A$
induces a clique: if two vertices $x,y\in A$ are nonadjacent, then
together with two nonadjacent vertices $v_1$ and $v_3$ of $H$, they
form a 4-hole $x v_1 y v_3 x$.  The following observation follows from
the fact that $H$ is the shortest hole of $G$.
\begin{proposition}\label{lem:at-most-3}
  A vertex not in $A$ is adjacent to at most three vertices of $H$ and
  these vertices have to be consecutive in $H$.
\end{proposition}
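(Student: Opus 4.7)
The plan is to exploit the minimality of the shortest hole $H$ by considering, for each pair of cyclically consecutive neighbors of $v$ on $H$, the short cycle obtained by routing through $v$; such a cycle must itself be chordless and hence at least as long as $H$, which severely restricts how the neighbors of $v$ can be distributed around $H$.

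Concretely, fix $v\notin A$ and write $N_H(v)=\{u_{a_1},\dots,u_{a_m}\}$ indexed in the cyclic order induced by $H$. These vertices partition $H$ into $m$ arcs of lengths $\ell_1,\dots,\ell_m$ with $\sum_i \ell_i=|H|$. For each arc of length $\ell_i\ge 2$, consider the cycle $C_i$ obtained by concatenating that arc with the length-$2$ path $u_{a_i}\, v\, u_{a_{i+1}}$. Since $H$ is chordless (so the arc has no internal chord) and since $u_{a_i},u_{a_{i+1}}$ are cyclically consecutive in $N_H(v)$ (so $v$ has no neighbor in the interior of the arc), $C_i$ is chordless, i.e., a hole of length $\ell_i+2$. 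Minimality of $H$ then forces $\ell_i\ge |H|-2$.

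Now a short counting argument finishes the job. If every $\ell_i=1$, then $m=|H|$ and $v$ would be adjacent to every vertex of $H$, contradicting $v\notin A$. So at least one arc has length $\ge |H|-2$. If two arcs had length $\ge|H|-2$, then $|H|=\sum_i \ell_i\ge 2(|H|-2)$, giving $|H|\le 4$, contrary to the standing assumption that $H$ is long (shorter holes are dispatched by the direct branching of step 2 of the algorithm, so we may assume $|H|\ge 5$). Hence exactly one arc is long and all others have length $1$. According as the long arc has length $|H|$, $|H|-1$, or $|H|-2$, we get $m=1$, $2$, or $3$, respectively; and the $m-1$ length-$1$ arcs between the remaining neighbors show that these neighbors form $m$ \emph{consecutive} vertices of $H$.

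The one genuine step is the first one: showing that every sufficiently long gap between cyclically consecutive neighbors of $v$ on $H$ lifts to an honest hole through $v$. Once this lift is in place, all that remains is the routine inequality $\sum_i \ell_i=|H|$ combined with $\ell_i\ge|H|-2$ for every long arc, which is essentially immediate.
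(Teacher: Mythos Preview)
Your argument is correct and is precisely the standard justification the paper gestures at: the paper offers no proof beyond the remark that the proposition ``follows from the fact that $H$ is the shortest hole of $G$,'' and your shortcut-through-$v$ construction is the canonical way to flesh this out. Your appeal to $|H|\ge 5$ is legitimate in context, since Section~3 operates entirely under the standing hypothesis $|H|\ge k+4$ (step~3 of the algorithm), and indeed the statement as literally written does fail for a $4$-hole.
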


The first set is defined by $V_0 = V(G) \setminus (M\cup A)$, and let
$G_0 = G[V_0]$.  Note that $\{M$, $A_0$, $V_0\}$ partitions $V(G)$,
and $H$ is disjoint from $A_0$.  Since $|H|\ge k + 4 > |M|$ and $G_0$
is chordal, the hole $H$ intersects both $M$ and $V_0$.  Every
component of $H-M$ is an induced path of $G_0$, and there are at most
$|M|$ such paths.  We divide each of these paths into $O(k^2)$ parts;
observing $|M|=O(k)$, this leads to to a decomposition of $H$ into
$O(k^3)$ segments.  Let $P$ denote such a path $v_1 v_2 \dots v_{p}$
in $H$, where $v_i\in V_0$ for $1\le i\le p$ and the other neighbors
of $v_1$ and $v_p$ in $H$ (different from $v_2$ and $v_{p-1}$
respectively) are in $M$.  We restrict our attention to paths with
$p>3$ (as there is a trivial bound for shorter paths). For such paths,
Proposition~\ref{lem:at-most-3} implies that the distance between
$v_1$ and $v_p$ in $G_0$ is at least $3$.  A further consequence is
$v_1 \not\sim v_{p}$.

Let us fix a clique tree $\cal T$ for the chordal subgraph $G_0$.  We
take the unique path $\cal P$ of bags $K_1$, $\dots$, $K_{q}$ that
connects the disjoint subtrees $\cT(v_1)$ and $\cT(v_p)$ in \cT, where
$K_1\in \cT(v_1)$ and $K_q\in \cT(v_p)$.  The condition $p>3$ implies
that $q>2$.  The removal of $K_1$ and $K_q$ will separate $\cT$ into a
set of subtrees, one of which contains all $K_{\ell}$ with $1<\ell<
q$; let ${\cal T}_1$ denote this nonempty subtree.  The second set,
$V_1$, is defined to be the union of all bags in ${\cal T}_1$ and
$\{v_1, v_p\}$.  By definition and observing that $V_1$ fully contains
$P$, it induces a connected subgraph.

We then focus on bags in $\cal P$ and their union.  (One may have
judiciously observed that these vertices induce an interval graph.)
From the definition of clique tree, we can infer that $v_1$ and $v_p$
appear only in $K_1$ and $K_q$ respectively, while every internal
vertex of $P$ appears in more than one bags of $\cal P$.  For every
$i$ with $1\le i\le p$, we denote by ${\mathtt{first}(i)}$ (resp.,
${\mathtt{last}(i)}$) the smallest (resp., largest) index $\ell$ such
that $1\le \ell\le q$ and $v_i\in K_\ell$, e.g., ${\mathtt{first}(1)}
= {\mathtt{last}(1)} = {\mathtt{first}(2)} = 1$ and
${\mathtt{last}(p-1)} = {\mathtt{first}(p)} = {\mathtt{last}(p)} = q$.
As $P$ is an induced path, for each $i$ with $1 < i < p$, we have
\begin{equation}
{\mathtt{first}(i)} \le {\mathtt{last}(i-1)} < {\mathtt{first}(i + 1)}
\le {\mathtt{last}(i)}.
\end{equation}
For $1\le\ell<q$, we define $S_{\ell} = K_{\ell}\cap K_{\ell+1}$.  For
any pair of nonadjacent vertices $v_i, v_j$ in $P$, (i.e., $1\le
i<i+1<j\le p$,) all minimal \stsep{v_i}{v_j}s are then
$\{S_{\ell}\mid{\mathtt{last}(i)}\le \ell<{\mathtt{first}(j)}\}$.

The third set, $V_2$, is defined to be the union of vertices in all
induced \stpath{v_1}{v_p}s in $G_0$.  Note that $V_2$ and $A_0$ are
completely connected: given a pair of nonadjacent vertices $x\in V_2$
and $y\in A_0$, we can find a hole of $G - M$ that consists of $y$ and
part of a \stpath{v_1}{v_p} through $x$ in $G_0$.  Since a vertex $x$
is an internal vertex of an induced \stpath{v_1}{v_p} of $G_0$ if and
only if it is in some minimal \stsep{v_1}{v_p} of $G_0$, we have
(noting $q>2$)
\begin{proposition}\label{lem:v-2}
  A vertex is in $V_2\setminus \{v_1,v_p\}$ if and only if it appears
  in more than one bags of $\cal P$.  Moreover, $V_2\setminus
  \{v_1,v_p\} \subseteq \bigcup_{1< \ell<q} K_{\ell}$.
\end{proposition}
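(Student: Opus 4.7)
The plan is to derive both the equivalence and the ``moreover'' inclusion by chaining two facts that are already available in the excerpt: (i) a vertex of $G_0$ is an internal vertex of some induced $v_1$-$v_p$ path if and only if it belongs to some minimal $v_1$-$v_p$ separator of $G_0$, and (ii) the minimal $v_1$-$v_p$ separators of $G_0$ are precisely the sets $S_\ell = K_\ell\cap K_{\ell+1}$ for $1\le \ell<q$, which is the specialization of the displayed characterization of minimal $v_i$-$v_j$ separators to the pair $(v_1,v_p)$, using $\mathtt{last}(1)=1$ and $\mathtt{first}(p)=q$.

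For the forward direction I would take $x\in V_2\setminus\{v_1,v_p\}$. By the definition of $V_2$, $x$ lies on some induced $v_1$-$v_p$ path of $G_0$, and as $x\notin\{v_1,v_p\}$ it is an internal vertex of that path. Applying (i) and (ii), $x\in S_\ell\subseteq K_\ell\cap K_{\ell+1}$ for some $1\le \ell<q$, so $x$ appears in at least two bags of $\mathcal{P}$. For the converse, suppose $x$ belongs to $K_i$ and $K_j$ with $i<j$. Connectedness of $\mathcal{T}(x)$ forces $x$ into every bag on the path from $K_i$ to $K_j$ in $\mathcal{T}$; in particular $x\in K_\ell\cap K_{\ell+1}=S_\ell$ for some index in that range, so by (i)--(ii) $x$ is an internal vertex of some induced $v_1$-$v_p$ path, hence $x\in V_2$. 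Moreover $x\neq v_1,v_p$, because the excerpt has already observed that $v_1$ appears only in $K_1$ and $v_p$ only in $K_q$, so neither belongs to two bags of $\mathcal{P}$.

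The ``moreover'' statement then follows by a short index check: from $x\in K_\ell\cap K_{\ell+1}$ with $1\le \ell<q$, the indices $\ell$ and $\ell+1$ together cover $\{2,\dots,q-1\}$ whenever $q>2$, and since this is exactly the hypothesis we are working under (guaranteed by $p>3$), at least one of $\ell,\ell+1$ lies strictly between $1$ and $q$, placing $x$ in $\bigcup_{1<\ell<q}K_\ell$. There is no serious obstacle in this proof; the only points requiring care are making sure that the ``internal vertex'' side of characterization (i) matches the excision $V_2\setminus\{v_1,v_p\}$ in the statement, and that the boundary cases $\ell=1$ and $\ell=q-1$ are absorbed by the inequality $q>2$ so that the ``moreover'' clause really covers them.
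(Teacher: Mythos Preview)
Your proposal is correct and follows exactly the approach the paper indicates: the paper's entire argument is the sentence immediately preceding the proposition, which states that a vertex is an internal vertex of an induced $v_1$-$v_p$ path in $G_0$ if and only if it lies in some minimal $v_1$-$v_p$ separator of $G_0$, together with the parenthetical ``noting $q>2$''. Your write-up simply unpacks this into the two directions and the index check for the ``moreover'' clause, using the same ingredients (the characterization of minimal $v_i$-$v_j$ separators as the $S_\ell$'s, and the fact that $v_1$, $v_p$ each lie in a single bag of $\mathcal P$) that the surrounding text has already established.
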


The definitions of $V_0$ and $G_0$ depend upon the hole $H$, while the
definitions of $V_1$ and $V_2$ depend upon both the hole $H$ and the
path $P$.  In this paper, the hole $H$ will be fixed, and we are
always concerned with a particular path of $H$, which will be
specified before the usage of $V_1$ and $V_2$.

The set $V_0\setminus V_1$ is easily understood, and we now consider
$V_1\setminus V_2$.  Given a pair of nonadjacent vertices $x,y\in
V_2$, we say that $x$ lies to the \emph{left} (resp., \emph{right}) of
$y$ if the bags of $\cal P$ containing $x$ have smaller (resp.,
greater) indices than those containing $y$.  If an induced path of
$G[V_2]$ consists of three or more vertices, then its endvertices are
nonadjacent and have a left-right relation.  This relation can be
extended to all pairs of consecutive (and adjacent) vertices $x,y$ in
this path, the one with smaller distance to the left endvertex of the
path is said \emph{to the left of the other}.  It is easy to verify
that these two definitions are compatible.
\begin{lemma}\label{lem:branch}
  For any component $C$ of the subgraph induced by $V_1\setminus V_2$,
  the set $N_{V_0}(C)$ induces a clique and there exists $\ell$ such
  that $1< \ell <q$ and $N_{V_0}(C)\subseteq K_{\ell}$.
\end{lemma}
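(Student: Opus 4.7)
The plan is to move everything to the clique tree $\cT$ of $G_0$ and reason about the subtrees $\cT(x)$ for $x \in C$. I would first establish two easy containments. For every $x \in C$ I would show $\cT(x) \subseteq \cT_1$: otherwise $\cT(x)$ would reach outside $\cT_1$, forcing $x$ into $K_1$ and $K_2$ (or $K_q$ and $K_{q-1}$) by connectivity and hence into $V_2$ via Proposition~\ref{lem:v-2}. From this it is routine to see that $N_{V_0}(C) \subseteq V_2$: a neighbour of $C$ inside $(V_1 \setminus V_2) \setminus C$ would extend $C$, and any candidate $y \in V_0 \setminus V_1$ has $\cT(y)$ entirely outside $\cT_1$ and so meets no $\cT(x)$ with $x \in C$.

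I would then form $\cT_C := \bigcup_{x \in C} \cT(x)$, which is a connected subtree of $\cT_1$ (connectedness follows by chaining the subtrees along any path inside the connected graph $G_0[C]$), and from it extract a unique anchor bag $K_\ell$ with $1 < \ell < q$: if $\cT_C$ meets $\mathcal{P}$, take the (necessarily unique, see below) bag of $\mathcal{P}$ it contains; otherwise $\cT_C$ lies in a single component of $\cT_1 \setminus \mathcal{P}$, which is attached in $\cT$ to exactly one bag $K_\ell$ with $1 < \ell < q$. For any $y \in N_{V_0}(C)$ the subtree $\cT(y)$ meets $\cT_C$ by definition; since $y \in V_2 \setminus \{v_1, v_p\}$ (the subtrees of $v_1, v_p$ avoid $\cT_1$ and so cannot neighbour $C$), $\cT(y)$ also meets $\mathcal{P}$ by Proposition~\ref{lem:v-2}. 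In either case the unique $\cT$-path between these two meeting points runs through $K_\ell$, and connectivity of $\cT(y)$ then forces $y \in K_\ell$. Hence $N_{V_0}(C) \subseteq K_\ell$, and since $K_\ell$ is a clique the clique condition is also satisfied.

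The main obstacle, and the step I expect to need most care, is the uniqueness underlying the anchor: $\cT_C$ meets $\mathcal{P}$ in at most one bag. I would argue by contradiction. If $\cT_C$ met $K_{\ell_1}$ and $K_{\ell_2}$ with $\ell_1 < \ell_2$, connectedness of $\cT_C$ would force it to contain every $K_\ell$ with $\ell_1 \le \ell \le \ell_2$, and for consecutive $\ell, \ell + 1$ one could pick $x, y \in C$ with $x \in K_\ell$ and $y \in K_{\ell+1}$. As $C \subseteq V_1 \setminus V_2$, each of $x, y$ lies in exactly one bag of $\mathcal{P}$, so the clique separator $S_\ell = K_\ell \cap K_{\ell+1}$ is an \stsep{x}{y} in $G_0$. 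Yet $x$ and $y$ are connected by a path inside $C \subseteq V_1 \setminus V_2$, and that path must meet $S_\ell$ at a vertex lying in two bags of $\mathcal{P}$---contradicting Proposition~\ref{lem:v-2}.
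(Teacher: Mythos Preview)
Your plan is correct and takes a genuinely different route from the paper. Both arguments begin the same way---showing $\cT(x)\subseteq\cT_1$ for $x\in C$ (equivalently, $x\notin K_1,K_q$) and hence $N_{V_0}(C)\subseteq V_2$---but then diverge. The paper argues \emph{inside the graph}: assuming $N_{V_0}(C)$ contains two nonadjacent vertices $x,y$, it takes an induced \stpath{v_1}{v_p} through $x$ and $y$, reroutes the portion between the extreme $C$-neighbours through $C$, and obtains an induced \stpath{v_1}{v_p} meeting $C$, contradicting $C\cap V_2=\emptyset$. Once the clique is established, the paper leaves the identification of a bag $K_\ell$ with $1<\ell<q$ to the reader. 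You instead argue \emph{inside the clique tree}: you build the subtree $\cT_C$, prove it meets $\mathcal{P}$ in at most one bag via a separator contradiction, extract an explicit anchor $K_\ell$, and then push every neighbour $y$ into $K_\ell$ by connectivity of $\cT(y)$. Your approach manufactures the specific $\ell$ directly and gets the clique conclusion for free from $K_\ell$; the paper's rerouting argument is quicker to write but leaves the ``exists~$\ell$'' step implicit. Both are sound, and your uniqueness step (the path in $C$ from $x\in K_\ell\setminus S_\ell$ to $y\in K_{\ell+1}\setminus S_\ell$ must cross $S_\ell\subseteq V_2$) is a clean way to avoid the path-rerouting machinery.
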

\begin{proof}
  Consider a vertex $x\in C$, which is different from $v_1$ and $v_p$.
  Since $x\in V_1$, it appears in some bag of $\cT_1$.  Recall that
  the only bag of $\cT_1$ that is adjacent to $K_1$ is $K_2$.  We
  argue first that $x\not\in K_1$: recall that $V_1$ is disjoint from
  $K_1\setminus (\{x\}\cup K_2)$, and thus if $x\in K_1$ then it has
  to be in $K_2$ as well, but then $x\in V_2$
  (Proposition~\ref{lem:v-2}), contradicting that $C\subseteq
  V_1\setminus V_2$.  For the same reason, $x\not\in K_q$.  As a
  result, $N_{V_0}(x)\subseteq V_1$, and then $N_{V_0}(C)\subseteq
  V_2$.  It now suffices to show that $N_{V_0}(C)$ induces a clique.
  Suppose for contradiction that there is a pair of nonadjacent
  vertices $x,y\in N_{V_0}(C)$.  We can find an induced
  \stpath{v_1}{v_p} $P'$ through $x$ and $y$; without loss of
  generality, let $x$ lie to the left of $y$, i.e., $P'=v_1 \cdots x
  \cdots y \cdots v_p$.  Let $x'$ and $y'$ be the first and last
  vertices in $P'$ that are adjacent to $C$, and let $x' P'' y'$ be an
  induced path with all internal vertices from $C$.  Note that $x'$
  either is $x$ or lies to the left of $x$ in $P'$ and $y'$ either is
  $y$ or lies to the right of $y$, which imply $x'\not\sim y'$.  Thus
  $v_1 \cdots x' P'' y' \cdots v_p$ is an induced \stpath{v_1}{v_p}
  through $C$, which is impossible.  This completes the proof.
\end{proof}

Such a component $C$ is called a \emph{branch} of $P$, and we say that
it is \emph{near to} $v_i\in P$ if there is an $\ell$ with
$\first{i}\le \ell\le \last{i}$ satisfying the condition of
Lemma~\ref{lem:branch}.  In other words, $C$ is near to $v_i\in P$ if
and only if $N_{V_0}(C)\subseteq N[v_i]$.  Applying
Proposition~\ref{lem:at-most-3} on any vertex in $N_{V_0}(C)$, we
conclude that a branch is near to at most three vertices of $P$.  If
there exists some hole passing through $C$, then $C$ has to be
adjacent to $M$: by Lemma~\ref{lem:branch} and recalling that $V_2$
and $A_0$ are completely connected, $N_{V_0}(C)\cup A_0$ is a clique,
and thus a hole cannot enter and leave $C$ both via $N_{V_0}(C)\cup
A_0$.  The converse is not necessarily true: some branch that is
adjacent to $M$ might still be disjoint from all holes, e.g., if
$N(C)$ is a clique.  This observation inspires us to generalize the
definition of simplicial vertices to sets of vertices.

\begin{definition}
  A set $X$ of vertices is called \emph{simplicial in a graph $G$} if
  $N[X]$ induces a chordal subgraph of $G$ and $N(X)$ induces a clique
  of $G$.
\end{definition}

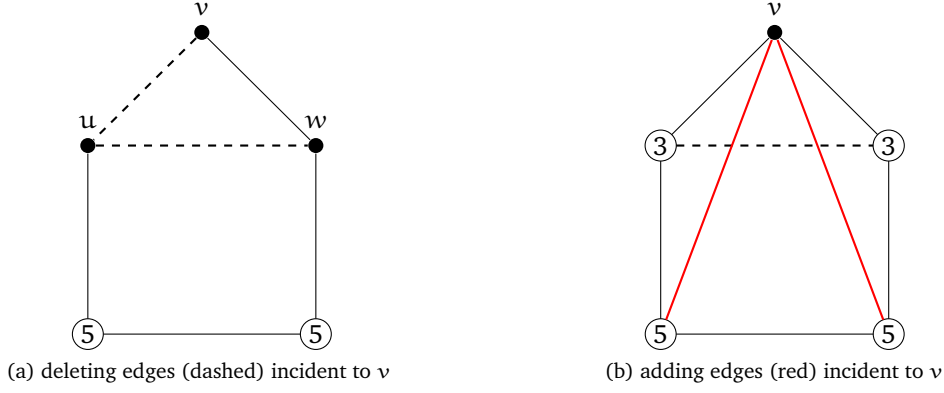
\begin{figure*}[t]
  \centering
  \subfloat[deleting edges (dashed) incident to $v$]{\label{fig:bad-delete}
    \makebox[.45\textwidth]{
  \begin{tikzpicture}
    \tikzstyle{vertex} = [fill=black,circle,inner sep=2pt]
    \tikzstyle{clique} = [draw,circle,outer sep=0,inner sep=1,minimum size=5]
    
    \node [vertex,label=above:$v$] (v2) at (-1.5,3.5) {};
    \node [vertex,label=above:$u$] (v1) at (-3,2) {};
    \node [vertex,label=above:$w$] (v3) at (0,2) {};
    \node [clique] (v4) at (0,-0.5) {$5$};
    \node [clique] (v5) at (-3,-0.5) {$5$};
    \draw (v2) -- (v3) -- (v4) -- (v5) -- (v1);
    \draw [thick, dashed] (v2) -- (v1) -- (v3);
  \end{tikzpicture}
  }}
  \subfloat[adding edges (red) incident to $v$]{\label{fig:bad-add}
    \makebox[.45\textwidth]{\centering
  \begin{tikzpicture}
    \tikzstyle{vertex} = [fill=black,circle,inner sep=2pt]
    \tikzstyle{clique} = [draw,circle,outer sep=0,inner sep=1,minimum size=5]
    
    \node [vertex,label=above:$v$] (v2) at (-1.5,3.5) {};
    \node [clique] (v1) at (-3,2) {$3$};
    \node [clique] (v3) at (0,2) {$3$};
    \node [clique] (v4) at (0,-0.5) {$5$};
    \node [clique] (v5) at (-3,-0.5) {$5$};
    \draw (v1) -- (v2) -- (v3) -- (v4) -- (v5) -- (v1);
    \draw [thick, dashed] (v1) -- (v3);
    \draw [red, thick] (v5) -- (v2) -- (v4);
  \end{tikzpicture}}
  }
  \caption{Possible modifications to a simplicial vertex $v$.
    (\textcircled{$x$} means a clique of $x$ vertices and an edge means all the edges between the two cliques/vertices.) (a) A minimal solution with two edge deletions. (b) A minimal solution with one edge deletion and two edge addition.) }
  \label{fig:bad}
\end{figure*}
It is easy to verify that a simplicial set of vertices is disjoint
from all holes.  This suggests that simplicial sets are irrelevant to
{\sc chordal editing} problem and we may never want to add/delete
edges incident to a vertex in a simplicial set.  However, this is not
true: as Figure~\ref{fig:bad} shows, we may need to add/delete such
edges if $N(X)$ was modified.  As characterized by the following
lemma, this is the only reason for touching it in the solution.  In
other words, a simplicial set $X$ will only concern us after $N(X)$
has been changed.  We say that a chordal editing set ($V_-, E_-, E_+$)
{\em edits} a set $X\subset V(G)$ of vertices if either $V_-$ contains
a vertex of $X$ or $E_-\cup E_+$ contains an edge with at least one
endpoint in $X$.  We use a classic result of Dirac
\cite{dirac-61-chordal-graphs} stating that the graph obtained by
identifying two cliques of the same size from two chordal graphs is
also chordal.
\begin{lemma}\label{lem:chordal-protrusion}
  A minimal chordal editing set edits a simplicial set $U$ only if it
  removes at least one edge induced by $N(U)$. 
\end{lemma}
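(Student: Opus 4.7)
The plan is to establish the contrapositive by a ``local undo'' argument: if a chordal editing set $(V_-,E_-,E_+)$ edits $U$ but removes no edge of $G[N(U)]$, then one can strip away every operation touching $U$ and still obtain a chordal editing set, contradicting minimality. Concretely, I would set $V_-' := V_-\setminus U$, $E_-' := \{e\in E_- : e\cap U=\emptyset\}$, and $E_+' := \{e\in E_+ : e\cap U=\emptyset\}$. Because the original set edits $U$, at least one of these three sets is a proper subset of its predecessor, so the new triple is strictly smaller and the minimality of $(V_-,E_-,E_+)$ would be violated. The remaining task is to check that $(V_-',E_-',E_+')$ is still a valid chordal editing set of $G$.

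Let $G^*$ and $G^{**}$ denote the graphs obtained by applying $(V_-,E_-,E_+)$ and $(V_-',E_-',E_+')$ to $G$, respectively, and set $N^* := N(U)\setminus V_-$. I would describe $G^{**}$ as the clique sum, along $N^*$, of two chordal graphs. The first summand is $G^{**}[V(G^*)\setminus U]$, which by construction coincides with $G^*[V(G^*)\setminus U]$ and is therefore chordal as an induced subgraph of $G^*$. The second summand is $G^{**}[U\cup N^*]=G[U\cup N^*]$, which is chordal as an induced subgraph of $G[N[U]]$, using that $U$ is simplicial. The hypothesis that no edge of $G[N(U)]$ lies in $E_-$ guarantees that $N^*$, which is already a clique in $G$, remains a clique in $G^{**}$. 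Dirac's clique-identification theorem (cited just before the statement) then delivers chordality of $G^{**}$, and $(V_-',E_-',E_+')$ is the desired smaller editing set.

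The main obstacle is a careful edge-bookkeeping step to make the clique-sum decomposition unambiguous. First, I must confirm that $G^{**}$ and $G^*$ agree on $V(G^*)\setminus U$: this is immediate because the operations that were dropped are precisely those whose execution or non-execution does not affect edges or vertices outside $U$. Second, I must verify that the only $G^{**}$-edges from $U$ to $V(G^*)\setminus U$ land in $N^*$; since neither $E_-'$ nor $E_+'$ contains an edge with an endpoint in $U$, the $G^{**}$-neighborhood of any $u\in U$ outside $U$ equals $N_G(u)\subseteq N(U)$, which meets $V(G^*)$ in exactly $N^*$. Once this bookkeeping is done, the Dirac gluing fires without further issue and contradicts the assumed minimality of $(V_-,E_-,E_+)$.
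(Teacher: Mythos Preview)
Your proof is correct and follows essentially the same approach as the paper: strip from the editing set all operations touching $U$, then argue that the resulting graph is the clique sum along $N(U)\setminus V_-$ of the already-edited graph outside $U$ and the untouched chordal graph $G[N[U]\setminus V_-]$, invoking Dirac's theorem to conclude chordality. The paper presents the argument in the direct form (assume no edge of $N(U)$ is deleted and show the restricted set equals the original by minimality), while you phrase it as the contrapositive, but the substance is identical.
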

\begin{proof}
  Let ($V_-, E_-, E_+$) be a minimal editing set of $G$ such that
  $E_-$ does not contain any edge induced by $N(U)$.  We restrict the editing set
  to the subgraph $G - U$, i.e., we consider the set ($V_-\setminus U,
  E_-\setminus (U\times V(G)), E_+\setminus (U\times V(G))$), and let
  $G'$ be the graph obtained by applying it to $G$.  Clearly $G' - U =
  G - U$ is chordal, where $N(U)\setminus V_-$ induces a clique.  Also
  chordal is the subgraph of $G'$ induced by $N[U]\setminus V_-$.  Both
  of them contain the clique $N(U)\setminus V_-$.  Since $G'$ can be
  obtained from them by identifying $N(U)\setminus V_-$, it is also
  chordal.  Then by the minimality of ($V_-, E_-, E_+$), it must be
  the same as ($V_-\setminus U, E_-\setminus (U\times V(G)),
  E_+\setminus (U\times V(G))$), and this proves this lemma.
\end{proof}

Now we are ready to define segments of the path $P$, which are
delimited by some special vertices called {junctions}.  By definition,
a branch is simplicial in $G_0$, but not necessarily simplicial in $G$.  

\begin{definition}[Segment]
  A vertex $v\in P$ is called a \emph{junction (of $P$)} if
  \begin{enumerate}[\itshape(1)]
  \item some bag $K$ that contains $v$ is adjacent to $M\setminus
    A_M$;
  \item some branch near to $v$ is adjacent to $M\setminus A_M$;
  \item some branch near to $v$ is not simplicial in $G$; or
  \item $N_{V_2}(v)$ is not completely connected to $A$.
  \end{enumerate}
  A sub-path $v_s\cdots v_t$ of $P$ is called a \emph{segment},
  denoted by $[v_s, v_t]$, if $v_s$ and $v_t$ are the only junctions
  in it.
\end{definition}

We point out that the four types are not exclusive, and one junction
might be in more than one types.  For a junction $v$ of type (1) or
(2), we say that the vertex in $M\setminus A_M$ used in its definition
\emph{witnesses} it.  Let us briefly explain the intuition behind the
definition of junctions and segments.

\begin{remark}
  For a junction $v$ of type (1) or (2), there is a connection from
  $v$ to $M\setminus A_M$ that is \emph{local} to $v$ in some sense;
  for a junction $v$ of type (3) or (4), there is a hole near to $v$,
  and its disposal might interfere with that of $H$.  On the other
  hand, since there is no junctions inside a segment $[v_s, v_t]$, if
  another hole $H'$ intersects it, then $H'$ has to ``go through the
  whole segment.''  Or precisely, $H'$ necessarily enters and exits
  the segment via $N[v_{s}]$ and $N[v_{t}]$, respectively.
\end{remark}

The definition of junction and segment extends to all paths of $H- M$.
In polynomial time, we can construct $V_0$ for $H$ and $V_1, V_2$ for
each path $P$ of $H- M$, from which all junctions of $H$ can be
identified.  In particular, the endvertices of $P$ are adjacent to
$M\setminus A_M$, hence junctions (of type (1)).  As a result, every
vertex in $V(H)\setminus M$ is contained in some segment, and in each
path of $H-M$, the number of segments is the number of junctions minus
one.

We are now ready for the main result of this section that gives a
cubic bound on the number of segments of $H$.  It should be noted the
constants---both the exponent and the coefficient---in the following
statement are not tight, and the current values simplify the argument
significantly.  Recall that a vertex not in $A$ sees at most three
vertices in $H$, and they have to be consecutive.
\begin{theorem}\label{lem:bound-segments}
  If $H$ contains more than $|M|\cdot (12 k^2 + 87 k + 75)$
  segments, then we can either find a vertex that has to be in $V_-$,
  or return ``NO.''
\end{theorem}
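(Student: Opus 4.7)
The plan is a double pigeonhole followed by a case analysis of the four junction types. Since $H - M$ decomposes into at most $|M|$ components, each an induced path of $G_0$, the assumed lower bound forces some single path $P$ of $H - M$ to contain more than $12k^2 + 87k + 75$ segments, hence more than $12k^2 + 87k + 76$ junctions. Distributing these among the four junction types, some type accounts for more than roughly $3k^2 + \Theta(k)$ junctions of $P$. The body of the proof would show that if any one type has this many junctions on $P$, then either a specific vertex of $V_0$ must lie in $V_-$ in every feasible solution, or no solution of size $(k_1,k_2,k_3)$ exists at all.

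For junctions of types $(1)$ and $(2)$, each is witnessed by some $m \in M \setminus A_M$, and $|M \setminus A_M| \le |M| \le k+1$. A further pigeonhole fixes a single $m$ witnessing more than $\Omega(k)$ such junctions on $P$. I would argue that widely separated junctions witnessed by the same $m$ yield a large collection of distinct induced holes through $m$: given junctions $v_i$ and $v_j$ sufficiently far apart along $P$, their witnesses $u \in N(m)\cap K$ and $u' \in N(m)\cap K'$ (with $K \ni v_i$, $K' \ni v_j$), combined with a subpath of $P$ and bag shortcuts in ${\cal T}$, assemble into an induced hole of $G$ passing through $m$, and Proposition~\ref{lem:at-most-3} prevents these holes from being short enough to overlap heavily. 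Because $m$ cannot be deleted, each such hole costs at least one operation in $E_- \cup E_+ \cup V_-$, so more than $O(k)$ such holes either exhaust the budget (return ``NO'') or pin down a common vertex of $V_0 \cap V(P)$ that must belong to $V_-$ in every solution.

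For types $(3)$ and $(4)$, the obstructions are local and thus more tractable. A type-$(3)$ junction $v$ has a branch $C$ near $v$ that is not simplicial in $G$; by Lemma~\ref{lem:branch}, $N_{V_0}(C)$ is a clique, so if $N_G(C)$ is not a clique then the offending non-adjacency involves a vertex of $M \setminus A_M$, and Lemma~\ref{lem:chordal-protrusion} forces an edit on an edge induced by $N(C)$ (or inside $N[C]$). A type-$(4)$ junction $v$ supplies some $x \in N_{V_2}(v)$ and $y \in A$ with $xy \notin E(G)$; together with an $H$-neighbor of $v$ and the fact that $V_2$ is completely connected to $A_0$, this yields an induced $4$-hole containing $v$ and $y$ that must be destroyed. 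In either case, many pairwise well-separated junctions give pairwise distinct obstructions, which quickly overspend $k$ unless a common forced $V_-$ vertex is identified.

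The main obstacle will be the independence argument: showing that distinct well-separated junctions really do require distinct solution operations rather than being fixed by a single cheap edit. This depends on the clique-tree structure of $G_0$ together with Lemma~\ref{lem:branch} and Proposition~\ref{lem:v-2} (which control how branches and $V_2$ sit relative to the bag path ${\cal P}$) and Proposition~\ref{lem:at-most-3} (which forces the alternate holes we construct to be long and induced). The explicit constants $12$, $87$, $75$ should then emerge by carefully summing per-type, per-$m$ bounds and absorbing the lower-order error terms introduced by the nested pigeonhole splits.
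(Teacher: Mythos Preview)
Your overall pigeonhole scaffolding matches the paper, but the type-(3)/(4) analysis contains genuine errors, and you have the source of the forced $V_-$ vertex in the wrong place.

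First, for type~(4): you claim that a nonadjacent pair $x\in N_{V_2}(v)$, $y\in A$ yields an induced $4$-hole. This is impossible, since $H$ is a \emph{shortest} hole and $|H|\ge k+4>4$; the paper in fact uses exactly this to rule out short candidates and instead builds a longer hole through $y$ by extending an induced path through $x$ in $G[V_2]$ until it first meets $N(y)$ on each side. Second, for type~(3): you assert that the offending non-adjacency in $N(C)$ must involve $M\setminus A_M$. That is backwards. If $v$ is a junction of type~(3) but \emph{not} of type~(1) or~(2), then by definition of types~(1)/(2) we have $N(C)\cap M\subseteq A_M$; since $N_{V_0}(C)$ is a clique (Lemma~\ref{lem:branch}), any non-adjacency in $N(C)$ is between $V_0$ and $A_M$, or else the obstruction is a hole inside $N[C]$ that must meet $A_M$.

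The consequence you miss is that every hole produced from a ``pure'' type-(3)/(4) junction passes through $A$, and in fact contains at most one vertex of $A_0$; distinct such holes can intersect only inside $A$. This is where the forced vertex actually comes from: if some $u\in A_0$ lies on at least $k+1$ of these holes, then $u$ must be in $V_-$; otherwise one finds $k+1$ holes meeting only in $M$, and returns ``NO.'' Your proposal instead looks for the forced vertex among $V_0\cap V(P)$ as a byproduct of the type-(1)/(2) argument, but in the paper types~(1)/(2) only ever yield ``NO'' (Claim~2 builds $k+1$ holes through a single $w\in M$ that share only $w$), and the bound of $14$ for type-(1) junctions per witness (Claim~1) is obtained not by a packing argument but by exhibiting a hole strictly shorter than $H$. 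Without identifying $A$ (and the $A_0$/$A_M$ split) as the bottleneck for types~(3)/(4), your independence argument cannot close.
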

\begin{proof}
  We show that $H$ contains at most $|M|\cdot (12 k^2 + 87 k + 75)$
  junctions.  Recall that there are at most $|M|$ paths in $H - M$. To
  obtain a contradiction, we suppose that some path $P$ of $H - M$
  contains $12 k^2 + 87 k + 75$ junctions.  Let us first attend
  to junctions of type (1) in $P$.
  \begin{claim}\label{claim-1}
    Each $w\in M\setminus A_M$ witness at most $14$ junctions of type
    (1) in $P$.
  \end{claim}
  \begin{proof}
    We are proving a stronger statement of this claim, i.e., $w$
    witness at most $14$ junctions of type (1) in the entire hole $H$.
    Suppose, for contradiction, that $15$ vertices in $H$ appear in
    some bag adjacent to $w$; let $X$ be this set of vertices.
    Assume first that $X$ is consecutive.  At most 3 of them are
    adjacent to $w$, and they are consecutive in $H$.  Thus, we can
    always pick 6 consecutive vertices from $X$ that are disjoint from
    $N_H(w)$; let them be $\{v_i,\dots,v_{i+5}\}$.  By definition,
    there are two vertices $u_1, u_2\in V_0\cap N(w)$ such that
    $u_1\sim v_i$ and $u_2\sim v_{i+5}$.  It is easy to verify that
    $u_2\not\sim v_{i+2}$ and $u_1\not\sim v_{i+3}$ and $u_1\not\sim
    u_2$.  Therefore, we can find an induced \stpath{u_1}{u_2} with
    all interval vertices from $\{v_i,\dots, v_{i+5}\}$.  The length
    of this path is at least $3$, and hence it makes a hole with $w$
    of length at most $9$.  Assume now that $X$ is not consecutive in
    $P$, then we can pick a pair of nonadjacent vertices $v_i,v_j$
    from $X$ such that the $v_\ell\not\in X$ for every $i<\ell<j$.
    There are two vertices $u_1, u_2\in V_0\cap N(w)$ such that
    $u_1\sim v_i$ and $u_2\sim v_j$.  It is easy to verify that $w u_1
    v_i \cdots v_j u_2 w$ is a hole.  By assumption that $|X| \ge 15$,
    we have $j-i\le |H| - 13$.  In either case, we end with a hole
    strictly shorter than $H$.  The contradictions prove this claim.
    \renewcommand{\qedsymbol}{$\lrcorner$}
  \end{proof}
  \begin{claim}\label{claim-2}
    If some vertex $w\in M\setminus A_M$ witnesses $5k + 75$ junctions
    of types (1) and (2) in $P$, then we can return ``NO.''
  \end{claim}
  \begin{proof}
    Let $X$ be this set of junctions, we order them according to their
    indices in $P$ and group each consecutive five from the beginning.
    We omit groups that contain junctions of type (1) witnessed by
    $w$, and in each remaining group, we pair the second and last
    vertices in it.  According to Claim~\ref{claim-1}, we end with at
    least $k+1$ pairs, which we denote by ($v_{\ell_1}, v_{r_1}$),
    $\cdots$, ($v_{\ell_{k+1}}, v_{r_{k+1}}$), $\cdots$.

    For each pair ($v_{\ell_j}, v_{r_j}$), where $1\le j\le k+1$, we
    construct a hole $H_j$ as follows.  By definition, there is a
    branch $C_{\ell_j}$ (resp., $C_{r_j}$) whose neighborhood in $H$
    is a proper subset of
    $\{v_{{\ell_j}-1},v_{\ell_j},v_{{\ell_j}+1}\}$ (resp.,
    $\{v_{{r_j}-1},v_{r_j},v_{{r_j}+1}\}$).  By the selection of the
    pair $v_{\ell_j}$ and $v_{r_j}$ (two vertices of $X$ have been
    skipped in between), they are nonadjacent, and $r_j - \ell_j > 2$.
    Therefore, $C_{\ell_j}$ and $C_{r_j}$ are distinct and necessarily
    nonadjacent.  Since $C_{\ell_j}$ induces a connected subgraph and
    is adjacent to both $w$ and
    $\{v_{{\ell_j}-1},v_{\ell_j},v_{{\ell_j}+1}\}$, we can find an
    induced \stpath{w}{v_{{\ell_j}+1}} $P_{\ell_j}$ with all internal
    vertices from $C_{\ell_j}\cup \{v_{{\ell_j}-1},v_{\ell_j}\}$.
    Likewise, we can obtain an induced \stpath{w}{v_{r_j - 1}}
    $P_{r_j}$ with all internal vertices from $C_{r_j-1}\cup
    \{v_{{r_j}},v_{{r_j}+1}\}$.  These two paths $P_{\ell_j}$ and
    $P_{r_j}$, together with $v_{\ell_j +1}\dots v_{r_j-1}$, make the
    hole $H_{j}$: we have ${\ell_j +1}<{r_j-1}$; for each ${\ell_j
      +1}\le s\le {r_j-1}$, $v_s\not\sim w$; and for each ${\ell_j
      +1}< s< {r_j-1}$, $v_s\not\sim C_{\ell_j}, C_{r_j}$.  This hole
    goes through $w$.  This way we can construct $k+1$ holes, and it
    can be easily verified that they intersect only in $w$.  Since we
    are not allowed to delete $w$, we cannot fix all these holes by at
    most $k$ operations.  Thus we can return ``NO.''
    \renewcommand{\qedsymbol}{$\lrcorner$}
  \end{proof}

  If Claim~\ref{claim-2} applies, then we are already done; otherwise,
  there are at most $|M|\cdot (5k + 74)$ junctions of the first two
  types in $P$.  We proceed by considering the set $B$ of junctions
  that are only of type (3) or (4) but not of the first two types.
  Its number is at least (noting $|M| \le k+1$)
  $$(12 k^2 + 87 k + 75) - (5k+74)\cdot |M| \ge 7 k^2 + 7k + 1.
  $$
  We order $B$ according to their indices in $P$, and let $b_{i}$
  denote the index of the $i$th vertex of $B$ in $P$.  For each $0\le
  i\le k(k+1)$, we use the ($7i+3$)th vertex of $B$ to construct a
  hole $H_i$. Then we argue that this collection of holes either
  allows us to identify a vertex that has to be in the solution, or
  conclude infeasibility.

  The first case is when $v_{b_{7i+3}}$ is of type (4): there is a
  pair of nonadjacent vertices $x\in N_{V_2}(v_{b_{7 i + 3}})$ and
  $y\in A$.  In this case we can assume that $x$ is adjacent to
  neither $v_{b_{7 i + 1}}$ nor $v_{b_{7 i + 5}}$; otherwise $x
  v_{b_{7 i + 1}} y v_{b_{7 i + 3}} x$ or $x v_{b_{7 i + 3}} y v_{b_{7
      i + 5}} x$ is a $4$-hole, which contradicts the fact that $H$ is
  the shortest.  In other words, $x$ only appears in some bag between
  $K_{\last{b_{7 i + 1}}}$ and $K_{\first{b_{7 i + 5}}}$; on the other
  hand, by definition of $V_2$, it appears in at least two of these
  bags.  There is thus an induced \stpath{v_{b_{7 i + 1}}}{v_{b_{7 i +
        5}}} $P_i$ via $x$ in $G[V_2]$.  Starting from $x$, we
  traverse $P_i$ to the left until the first vertex $x_1$ that is
  adjacent to $y$; the existence of such a vertex is ensured by the
  fact that $y\sim v_{b_{7 i + 1}}$.  Similarly, we find the first
  neighbor $x_2$ of $y$ in $P_i$ to the right of $x$.  Then the
  sub-path of $P_i$ between $x_1$ and $x_2$, together with $y$, gives
  the hole $H_i$.  By construction, no vertex of $H_i - y$ is adjacent
  to $v_{b_{7 i}}$ or $v_{b_{7 i + 6}}$.

  In the other case, $v_{b_{7i+3}}$ is type (4): some branch $C_i$
  near to $v_{b_{7i+3}}$ is not simplicial in $G$.  By definition,
  either the subgraph induced by $N(C_i)$ is not a clique, or the
  subgraph induced by $N[C_i]$ is not chordal.  Since $v_{b_{7i+3}}$
  does not satisfy the conditions of type (1) and (2), $N(C_i)\cap
  M\subseteq A_M$, i.e., $N(C_i)\setminus V_0 \subseteq A$.  On the
  other hand, according to Lemma~\ref{lem:branch}, $N(C_i)\cap V_0$
  induces a clique.  Therefore, there must be a pair of nonadjacent
  vertices $x\in N(C_i)\cap V_0$ and $y\in A_M$.  As $C_i$ is near to
  $v_{b_{7i+3}}$, it must hold that $x\in N(v_{b_{7i+3}})$; this has
  already been discussed in the previous case.  Suppose now that
  $N(C_i)$ induces a clique and there is a hole $H_i$ in $N[C_i]$.  We
  have seen that $N[C_i]\cap M = A_M$, thus this hole $H_i$ intersects
  $A_M$; let $w$ be a vertex in $V(H_i)\cap A_M$.  If $H_i$ is
  disjoint from $A_0$, then no vertex in $H_i\setminus M$ can be
  adjacent to $v_{b_{7 i}}$ or $v_{b_{7 i + 5}}$.  Otherwise, it
  contains some vertex $u\in A_0$; noting that $A$ induces a clique,
  $H_i\cap A =\{u, w\}$.  Moreover, $N(C_i)\cap V_2$ is in the
  neighborhood of $v_{b_{7i+3}}$ and therefore $N(C_i)\cap V_2$ and
  $N(C_j)\cap V_2$ are disjoint for $i\neq j$: the existence of a
  vertex $x\in V_2$ adjacent to both $C_i$ and $C_j$ would contradict
  Proposition~\ref{lem:at-most-3} (noting that the distance of
  $v_{b_{7i+3}}$ and $v_{b_{7j+3}}$ is greater than 2 on the hole
  $H$).

    In sum, we have a set $\cal H$ of at least $k(k+1)+1$ distinct
    holes such that
    \begin{inparaenum}[(1)]
    \item each hole in $\cal H$ contains at most one vertex of
      $A_0$, and
    \item the intersection of any pair of them is in $A$.
    \end{inparaenum}
    Recall that each hole has length at least $k + 4$, hence cannot be
    fixed by edge additions only.  If there is a $u\in A_0$ contained
    in at least $k+1$ holes of $\cal H$, then we have to put $u$ into
    $V_-$; otherwise we have to delete distinct elements (edges or
    vertices) to break different holes, which is impossible.  Now
    assume that no such a vertex $u$ exists, then there must be $k+1$
    holes that intersect only in $M$, which allow us to return ``NO.''
\end{proof}

\section{Mixed separators in chordal graphs}
\label{sec:mixed-separators}

Given a pair of nonadjacent vertices $x,y$ of a graph, we say that a
pair of vertex set $V_S$ and edge set $E_S$ is a \emph{mixed
  \stsep{x}{y}} if the deletion of $V_S$ and $E_S$ leaves $x$ and $y$
in two different components; its size is defined to be
($|V_S|, |E_S|$).  A {mixed \stsep{x}{y}} is \emph{inclusion-wise
  minimal} if there exists no other {mixed \stsep{x}{y}} ($V'_S,
E'_S$) such that $V'_S\subseteq V_S$ and $E'_S\subseteq E_S$ and at
least one containment is proper.  
If ($V_S, E_S$) is an
{inclusion-wise minimal} {mixed \stsep{x}{y}} in graph $F$, then each
component of $F - V_S - E_S$ is an induced subgraph of $F$.
Therefore, we have the following characterization of {inclusion-wise
  minimal} mixed separators in chordal graphs.

\begin{proposition}\label{lem:stub-is-chordal}
  In a chordal graph, all components obtained by deleting an
  inclusion-wise minimal \stsep{x}{y} are chordal.
\end{proposition}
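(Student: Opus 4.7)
The plan is to reduce the statement to the hereditary property of chordal graphs. Concretely, I will first justify the parenthetical claim preceding the proposition, namely that every connected component of $F - V_S - E_S$ is an \emph{induced} subgraph of $F$, and then invoke the fact that induced subgraphs of a chordal graph are chordal.

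For the first step, suppose for contradiction that some component $C$ of $F - V_S - E_S$ contains both endpoints $u,v$ of an edge $e = uv \in E_S$. Since $u$ and $v$ lie in the same component, there is a $u$-$v$ path in $F - V_S - E_S$; consequently, adding $e$ back does not merge any two components and in particular does not reconnect $x$ with $y$. Hence $(V_S, E_S \setminus \{e\})$ is still a mixed \stsep{x}{y}, contradicting the inclusion-wise minimality of $(V_S, E_S)$. Therefore, for every component $C$, no edge of $E_S$ has both endpoints in $V(C)$, and combined with $V(C) \cap V_S = \emptyset$ this yields $C = F[V(C)]$ as subgraphs of $F$.

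The proposition then follows immediately: any induced cycle of length at least four inside $C$ would, by the previous paragraph, also be an induced cycle of $F$, contradicting the assumption that $F$ is chordal. Thus each component is chordal. There is no genuine obstacle here; the whole argument is a one-line minimality exchange plus heredity of chordality.
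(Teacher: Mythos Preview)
Your proof is correct and follows exactly the approach the paper sketches: the sentence preceding the proposition already asserts that each component of $F - V_S - E_S$ is an induced subgraph of $F$, and you simply supply the easy minimality-exchange argument that justifies this claim before invoking heredity of chordality. There is nothing to add.
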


Consider an inclusion-wise minimal \stsep{x}{y} ($V_S, E_S$) in a
chordal graph $F$.  Let $\cT^F$ be a clique tree of $F$.  The
degenerated case where $E_S = \emptyset$ is well understood: $V_S$
itself makes an \stsep{x}{y}.  If $E_S\ne \emptyset$, then in the path
that connects $\cT^F(x)$ and $\cT^F(y)$, at least one bag is
disconnected by the deletion of $V_S$ and $E_S$.  This bag contains at
most $|V_S| + |E_S| +1$ vertices.  On the other hand, the remaining
vertices of every bag $K$, i.e., $K\setminus V_S$, appear in either
one or two components of $F - V_S - E_S$.  In the latter
case, the two components are precisely that contain $x$ and
$y$, respectively; otherwise the mixed separator cannot be
inclusion-wise minimal.

\begin{lemma}\label{lem:alg-mix-separator}
  Let $x$ and $y$ be a pair of nonadjacent vertices in a chordal graph
  $F$.  For any pair of nonnegative integers ($a,b$), we can find a
  mixed \stsep{x}{y} of size at most ($a,b$) or asserts its
  nonexistence in time $3^{a+b+1} \cdot |V(F)|^{O(1)}$.
\end{lemma}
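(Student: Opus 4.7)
Proof plan. The approach exploits the clique-tree structure of the chordal graph $F$. Fix a clique tree $\cT^F$ of $F$ (computable in polynomial time), and let $\mathcal{P} = K_1, \ldots, K_r$ be the unique path in $\cT^F$ from $\cT^F(x)$ to $\cT^F(y)$ (so $x \in K_1$ and $y \in K_r$). From the structural observations made immediately before the lemma, if a mixed \stsep{x}{y} of size at most $(a, b)$ exists, we may take one $(V_S, E_S)$ that is inclusion-wise minimal; and in that case (when $E_S \ne \emptyset$) some ``split'' bag $K \in \mathcal{P}$ has size at most $|V_S|+|E_S|+1 \le a+b+1$, with the remaining vertices $K \setminus V_S$ partitioned between $x$'s residual component and $y$'s residual component. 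The degenerate case $E_S = \emptyset$ corresponds to a pure vertex separator, which in a chordal graph lies in some $K_i \cap K_{i+1}$ and can be handled analogously.

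The algorithm enumerates, for each bag $K \in \mathcal{P}$ with $|K| \le a+b+1$ (at most $r = O(n)$ candidates), every $3$-coloring $\chi\colon K \to \{V, X, Y\}$ (at most $3^{|K|} \le 3^{a+b+1}$ per bag), subject to the natural boundary conditions $\chi(x)=X$ if $x\in K$ and $\chi(y)=Y$ if $y\in K$. For each $\chi$, we set
\[
V_S := \chi^{-1}(V), \qquad E_S := \{\,uv \in E(F[K]) : \chi(u)=X,\ \chi(v)=Y\,\},
\]
and check in polynomial time (via BFS/DFS) whether $|V_S| \le a$, $|E_S| \le b$, and $x$ is disconnected from $y$ in $F - V_S - E_S$. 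The first candidate satisfying all three conditions is returned; otherwise we conclude nonexistence. The total running time is $O(n) \cdot 3^{a+b+1} \cdot n^{O(1)} = 3^{a+b+1} \cdot n^{O(1)}$.

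Correctness reduces to showing that the coloring $\chi^*$ of a split bag $K^*$ induced by a fixed inclusion-wise minimal $(V_S^*, E_S^*)$ (namely, $V$ on $K^* \cap V_S^*$, $X$ on $K^* \cap$~$x$'s residual component, $Y$ on $K^* \cap$~$y$'s residual component) is among those enumerated, and that the $(V_S, E_S)$ produced from $\chi^*$ is itself a valid mixed \stsep{x}{y}. The main obstacle is this second point: every $x$-$y$ path in $F$ must traverse $K^*$ because $K^*$ lies on the clique-tree separator, but one also has to rule out reconnecting ``detours'' through side subtrees of $\cT^F - K^*$. This is where inclusion-wise minimality is essential---via the observation recalled just before the lemma that in every bag $K'$ the set $K' \setminus V_S^*$ lies in at most two components of $F - V_S^* - E_S^*$---which forces, for each bag $K'$ adjacent to $K^*$, the intersection $K' \cap K^*$ to be monochromatically colored by $\chi^*$ after removing the $V$-colored vertices. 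Consequently, each side subtree of $\cT^F - K^*$ inherits a single side and no further deletions beyond those prescribed at $K^*$ are required.
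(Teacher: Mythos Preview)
Your approach has a genuine gap: restricting $(V_S,E_S)$ to a single bag does \emph{not} always suffice, and the monochromaticity claim in your last paragraph is exactly where the argument breaks. The observation you invoke says that $K'\setminus V_S^*$ lies in \emph{at most two} components of $F-V_S^*-E_S^*$, namely the $x$-side and the $y$-side; it does \emph{not} say it lies in only one. So for a bag $K'$ adjacent to your split bag $K^*$, the intersection $K'\cap K^*$ may well contain both $X$-colored and $Y$-colored vertices. Any vertex $r\in K'\setminus K^*$ is then adjacent to both colors via edges outside $E(F[K^*])$, and your $(V_S,E_S)$ leaves the detour $p\text{--}r\text{--}q$ intact.

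Here is a concrete counterexample. Take the chordal graph whose clique tree is the path
\[
K_0=\{x,u_1,u_2,v_1,v_2\},\quad K_1=\{v_1,v_2,v_3\},\quad K_2=\{v_2,v_3,v_4\},\quad K_3=\{v_3,v_4,w_1,w_2,y\},
\]
and set $(a,b)=(0,3)$. Deleting the three edges $v_1v_3,\,v_2v_3,\,v_2v_4$ separates $x$ from $y$, so a $(0,3)$ mixed separator exists; note these edges do not all lie in one maximal clique. Your algorithm only tries $K_1$ and $K_2$ (the others have size $5>a+b+1=4$). For $K_1$, any $2$-coloring with $V_S=\emptyset$ leaves $v_2$ adjacent via the undeleted edge $v_2v_4\in E(K_2)$ to the $y$-side, or leaves $v_1,v_2$ both reachable from $x$ through $K_0$, so no coloring separates; $K_2$ is symmetric. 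Your procedure therefore wrongly outputs ``NO''.

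This is precisely why the paper's algorithm does \emph{not} stop at one bag: after partitioning $K^*$, it enqueues every adjacent bag whose intersection with $K^*$ meets both $X$ and $Y$, partitions its new vertices, and repeats. The budget argument ($|Z|\le a$, and each newly assigned $X$- or $Y$-vertex contributes at least one fresh $X$--$Y$ edge) is what caps the total branching at $3^{a+b+1}$ across \emph{all} processed bags. Your single-bag shortcut cannot be repaired without reintroducing exactly this propagation.
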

\begin{figure*}[ht]
\setbox4=\vbox{\hsize28pc \noindent\strut
\begin{quote}
  \vspace*{-5mm} \footnotesize {\bf Algorithm mixed-separator}($F,
  x,y, a, b$)
  \\
  {\sc input}: a chordal graph $F$, nonadjacent vertices $x$ and $y$,
  and nonnegative integers $a$ and $b$.
  \\
  {\sc output}: a mixed \stsep{x}{y} ($V_S, E_S$) of size at most
  ($a,b$) or ``NO.''\\[1mm]

  0 \hspace*{1em} \parbox[t]{0.85\linewidth}{
    find a minimum vertex \stsep{x}{y} $S$; 
    \\
    {\bf if} $|S|\le a$ {\bf then return } ($S,\emptyset$).
  }
  \\
  1 \hspace*{1em} $X = \emptyset$; $\quad$ $Y = \emptyset$; $\quad$ $Z
  = \emptyset$;
  \\
  2 \hspace*{1em} \parbox[t]{0.85\linewidth}{
    build a clique tree $\cT^F$ for $F$; 
    \\
    {\bf guess} a bag $K$ from the path of bags connecting $\cT^F(x)$
    and $\cT^F(y)$;
  }
  \\
  3 \hspace*{1em} {\bf enqueue}(${\cal Q}, K$);
  \\
  4 \hspace*{1em} {\bf while} ${\cal Q}\ne\emptyset$ {\bf do}
  \\
  4.1 \hspace*{2em} $K$ = {\bf dequeue}($\cal Q$);
  \\
  4.2 \hspace*{2em} {\bf if} $|K\setminus (X\cup Y\cup Z)|> a - |Z| +
  b - |E(F)\cap (X\times Y)| +1$ {\bf then return} ``NO'';
  \\
  4.3 \hspace*{2em} {\bf guess} a partition ($X_K, Y_K, Z_K$) of
  $K\setminus (X\cup Y\cup Z)$;
  \\
  4.4 \hspace*{2em} $X = X\cup X_K$; $\quad$ $Y = Y\cup Y_K$; $\quad$
  $Z = Z\cup Z_K$;
  \\
  4.5 \hspace*{2em} {\bf if} $a<|Z|$ or $b<|E(F)\cap (X\times Y)|$
  {\bf then return} ``NO'';
  \\
  4.6 \hspace*{2em} {\bf for each} bag $K'$ adjacent to $K$ that is
  not ``processed'' {\bf do}
  \\
  \hspace*{5em} {\bf if} $K'$ intersects both $X$ and $Y$ {\bf then}
  {\bf enqueue}(${\cal Q}, K'$);
  \\
  4.7 \hspace*{2em} mark $K$ ``processed'';
  \\
  5 \hspace*{1em} $V_S = Z$; $\quad$ $E_S = E(F) \cap (X\times Y)$.
  \\
  6 \hspace*{1em} {\bf if} $x$ and $y$ are disconnected in $F - V_S -
  E_S$ {\bf then return} ($V_S, E_S$);
  \\
  \hspace*{4em} {\bf else return} ``NO.''

\end{quote} \vspace*{-6mm} \strut} $$\boxit{\box4}$$
\vspace*{-9mm}
\caption{Algorithm finding mixed separators in chordal graphs.}
\label{fig:alg-mixed-separator}
\end{figure*}
\begin{proof}
  We use the algorithm described in
  Figure~\ref{fig:alg-mixed-separator}.  If the size of minimum
  \stsep{x}{y}s is no more than $a$, then step 0 will give a correct
  separator, and hence main part of the algorithm looks for a solution
  with $E_S\ne\emptyset$.  Let us explain the variables used in the
  algorithm and formally state its invariants.  The algorithm
  processes bags one by one, and maintains a partition ($X, Y, Z$) of
  vertices in all bags that have been processed.  The partition can be
  arbitrary if there exists no mixed \stsep{x}{y} of the designated
  size.  Otherwise the partition satisfies for some mixed \stsep{x}{y}
  ($V^*_S, E^*_S$) of the designated size that
  \begin{inparaitem}
  \item[(1)] $X$ and $Y$ are in the same components of $F -
    V^*_S - E^*_S$ as $x$ and $y$ respectively; and
  \item[(2)] $Z\subseteq V^*_S$.
  \end{inparaitem}
  The queue $\cal Q$ keeps all bags to be processed, and a bag is
  enqueued if it intersect both $X$ and $Y$.  A bag to be processed
  must be adjacent to a previously process bag, and since the queue
  starts from a single bag, at the end of the algorithm, all processed
  bags induce a connected subtree of $\cT^F$.

  The algorithm has no false positives.  Therefore, to verify its
  correctness, we show that each inclusion-wise minimal mixed
  \stsep{x}{y} ($V_S, E_S$) of size at most ($a,b$) can be found.
  We initialize $\cal Q$ by guessing a bag in the path connecting
  $\cT^F(x)$ and $\cT^F(y)$ that is disconnected by the deletion of
  ($V_S, E_S$); the existence of such a bag follows from previous
  discussion.  Main work of the algorithm is done in the loop of step
  4, each iteration of which processes a bag in $\cal Q$.  Let $K$ be
  the bag under processing.  By assumption, if a vertex $v\in
  K\setminus (X\cup Y\cup Z)$ is not in $V^*_S$, then it has to be
  incident to an edge in $E^*_S$.  Let $b' = |K\setminus (X\cup Y\cup
  Z)| - (a - |Z|)$; then at least $b'$ vertices of $K$ will remain in
  $F - V^*_S - E^*_S$, and any nontrivial partition of it has at least
  $b' - 1$ edges (when one side has precisely one vertex).  It cannot
  exceed $b - |E(F)\cap (X\times Y)|$; this justifies the exit
  condition 4.2.  Steps 4.3--4.5 are straightforward.  Step 4.6
  enqueues bags that have to be separated by the deletion of ($V^*_S,
  E^*_S$).

  It remains to verify that ($V_S, E_S$) constructed in step 5 is the
  objective mixed separator, i.e., $V^*_S =V_S = Z$ and $E^*_S = E_S =
  E(F)\cap (X\times Y)$.  Since we have shown that $Z\subseteq V^*_S$
  and $E(F)\cap (X\times Y)\subseteq E^*_S$, and by assumption, $x$
  (resp., $y$) remains connected to $X$ (resp., $Y$) in $F - V_S -
  E_S$, it suffices to show that $X$ and $Y$ are disconnected in $F -
  V_S - E_S$.  Suppose for contradiction that there is an induced path
  $P$ connecting $v_x\in X$ and $v_y\in Y$ in $F - V_S - E_S$.  Let
  $P$ be the path $u_1 \cdots u_p$ where $u_1 = v_x$ and $u_p = v_y$.
  Without loss of generality, assume that all internal vertices of $P$
  are disjoint from $X\cup Y$.  Let $l$ be the smallest index such
  that $1<l<p$ and $u_l\sim Y$.  We argue that $u_l\sim X$ as well.
  Otherwise, let $l'$ be the largest index such that $1<l'<l$ and
  $u_{l'}\sim X$.  It is easy to verify that in $F - V_S$, subgraphs
  induced by $X$, $Y$, and $X\cup Y$ are all connected.  Hence we can
  find an induced \stpath{u_{l'}}{u_{l}} with all internal vertices in
  $X\cup Y$; this path and $u_{l'} \cdots u_{l}$ make a hole, which is
  impossible as $F$ is chordal.  Let $v'_x\in X$ and $v'_y\in Y$ be
  neighbors of $u_l$.  Note that all bags handled in step 4 induce a
  connected subtree of $\cT^F$, and in particular, it intersects both
  $\cT^F(v'_x)$ and $\cT^F(v'_y)$.  If $v'_x\sim v'_y$, then there is
  a bag containing $\{u_l, v'_x, v'_y\}$.  Let us focus on bags that
  contain $v'_x$ and $v'_y$.  At least one of such bags is separated,
  and all of them are then enqueued in concession.  If $v'_x\not\sim
  v'_y$, then $u_l$ is in any \stsep{v'_x}{v'_y}, and at least one bag
  that contains $u_l$ is handled.  In both cases, $u_l$ has to be in
  $X\cup Y\cup Z$.  This gives a contradiction, and hence ($V_S, E_S$)
  must be a mixed \stsep{x}{y}.  This completes the proof of the
  correctness.

  We now analyze the runtime.  In step 2, there are at most $|V|$ bags
  in the path connecting $\cT^F(x)$ and $\cT^F(y)$, and thus the bag
  $K$ can be found in $O(|V|)$ time.  Note that this step is run only
  once.  The only step that takes exponential time is 4.3.  The set
  $K\setminus (X\cup Y\cup Z)$ has $3^{|K\setminus (X\cup Y\cup Z)|}$
  partitions, and after each execution of step 4.3, the budget
  decreases by at least $|K\setminus (X\cup Y\cup Z)| - 1$.  In total,
  this is upper bounded by $3^{a+b+1}$.  This completes the proof.
\end{proof}

We remark that the problem of finding a mixed separator of certain
size is fixed-parameter tractable even in general graphs: the
treewidth reduction technique of Marx et
al.~\cite{marx-13-treewidth-reduction} can be used after a simple
reduction (subdivide each edge, color the new vertices red and the
original vertices black, and find a separator with at most $k_1$ black
vertices and at most $k_2$ red vertices). However, the algorithm of
Lemma~\ref{lem:alg-mix-separator} for the special case of chordal
graphs is simpler and much more efficient.

The definition of mixed separator can be easily generalized to two
disjoint vertex sets---we may simply shrink each set into a single
vertex and then look for a mixed separator for these two new vertices.
Another interpretation of Lemma~\ref{lem:alg-mix-separator} is the following.
\begin{corollary}\label{col:alg-mix-separator}
  Let $X$ and $Y$ be a pair of nonadjacent and disjoint sets of
  vertices in a chordal graph $F$.
  For any nonnegative integer $a\le k_1$,  in
  time $3^{k_1+k_2+1}\cdot |V(F)|^{O(1)}$ we can find the minimum number $b$
  such that $b\le k_2$ and there is a mixed \stsep{X}{Y} of size
  ($a,b$) or assert that there is no mixed \stsep{X}{Y} of size
  ($a,k_2$).
\end{corollary}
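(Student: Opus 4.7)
The plan is to iterate $b$ from $0$ up to $k_2$ and, for each $b$, decide whether a mixed \stsep{X}{Y} of size $(a,b)$ exists via a set-aware variant of Lemma~\ref{lem:alg-mix-separator}; I return the smallest $b$ for which the test succeeds, or ``NO'' if no $b\le k_2$ works. Each invocation costs $3^{a+b+1}\cdot n^{O(1)}$ by the lemma, so the total running time is
\[
\sum_{b=0}^{k_2} 3^{a+b+1}\cdot n^{O(1)} \;=\; O\!\left(3^{a+k_2+2}\right)\cdot n^{O(1)} \;\le\; 3^{k_1+k_2+1}\cdot n^{O(1)},
\]
using $a\le k_1$ and absorbing the constant factor into the polynomial term.

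The main obstacle is producing this set-aware variant. The tempting route is the ``shrinking'' mentioned in the preceding remark: contract $X$ and $Y$ to single vertices $\bar{x}$ and $\bar{y}$ and invoke Lemma~\ref{lem:alg-mix-separator} as a black box. However, the contracted graph need not remain chordal---shrinking the two endpoints of a long induced path yields a hole---so the lemma does not apply directly. My plan is instead to modify Algorithm \textbf{mixed-separator} of Figure~\ref{fig:alg-mixed-separator} to operate on $F$ itself. Concretely, I initialize the algorithm's internal partition sets to $(X,Y,\emptyset)$ with the given $X,Y$; in step~0 I compute a minimum \stsep{X}{Y} in $F$ (well-defined and polynomial-time since $X,Y$ are nonadjacent and disjoint); and in step~2 I guess a bag $K$ of $\cT^F$ lying on some path between a bag intersecting $X$ and a bag intersecting $Y$. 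Since $X$ and $Y$ are disjoint and nonadjacent, the subtree-forests $\bigcup_{x\in X}\cT^F(x)$ and $\bigcup_{y\in Y}\cT^F(y)$ are vertex-disjoint in $\cT^F$, so such paths are well-defined between each pair of components, and any inclusion-wise minimal mixed \stsep{X}{Y} that is not captured by step~0 must cut at least one bag on some such path; the remaining steps 3--6 are unchanged, preserving the $3^{a+b+1}$ bound on the number of guesses.

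Correctness then transfers from the proof of Lemma~\ref{lem:alg-mix-separator}: the invariants that $Z\subseteq V_S^*$ and that the algorithm's $X,Y$ remain in the components of $F-V_S^*-E_S^*$ containing the given sets are maintained inductively through the BFS, and the closing argument that the algorithm's final $X$ and $Y$ are disconnected in $F - V_S - E_S$ goes through verbatim once ``vertex'' is replaced by ``set'' in the place where the nonexistence of an induced hole in $F$ is used to rule out a crossing path. This yields the claimed running time and the minimum $b$ (or the claimed nonexistence assertion).
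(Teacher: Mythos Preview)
Your proposal is correct, and in fact more careful than the paper itself. The paper offers no formal proof of the corollary; its entire justification is the sentence preceding the statement, which suggests shrinking $X$ and $Y$ to single vertices and invoking Lemma~\ref{lem:alg-mix-separator} as a black box. You rightly point out that this contraction need not preserve chordality (for a concrete instance: take $F$ the path $v_1v_2v_3v_4v_5v_6$, $X=\{v_1,v_6\}$, $Y=\{v_3\}$; contracting $X$ produces a $5$-cycle), so the lemma does not literally apply to the contracted graph. Your workaround---seeding the algorithm's partition with the given sets and running the BFS on $F$ itself---is a legitimate way to repair this, and the iteration over $b$ together with the geometric sum gives the stated bound.

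Two small points where your write-up could be tightened. First, the assertion that the closing argument of Lemma~\ref{lem:alg-mix-separator} ``goes through verbatim'' hides one thing worth stating: because the input sets $X$ and $Y$ are nonadjacent, no bag of $\cT^F$ meets both of them, so the pre-loaded vertices never force an edge into $E(F)\cap(X\times Y)$ before any processing, and every bag ever enqueued is reached through the BFS from the guessed starting bag; the hole produced in the contradiction step then lives entirely in the chordal graph $F$, so chordality of $F$ (not of any contracted graph) is what is invoked. Second, your step~2 now guesses a bag on \emph{some} tree path between the two subtree-forests rather than on a unique path; this is still only $O(|V(F)|)$ choices, so the polynomial factor is unaffected, but it is worth saying so explicitly.

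In short: the paper's route (shrink, then cite the lemma) and yours (run the lemma's algorithm in $F$ with $X,Y$ pre-assigned) reach the same conclusion, but yours actually closes the chordality gap that the paper glosses over.
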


\section{Proof of Theorem~\ref{thm:alg-compression}}
\label{sec:alg-chordal-editing}

We are now ready to put everything together and finish the analysis of
the algorithm.  We say that a chordal editing set is minimum if there
exists no chordal editing set with a smaller size.  Note that a
segment is contained in a unique path of $H-M$, which determines 
$V_1$ and $V_2$.

\begin{proof}[Proof of Theorem~\ref{thm:alg-compression}]
  Let ($V^*_-, E^*_-, E^*_+$) be a minimum chordal editing set of $G$
  of size no more than ($k_1, k_2, k_3$).  We start from a closer look
  at how it breaks $H$; by Theorem~\ref{lem:bound-segments}, we may
  assume that $H$ contains $O(k^3)$ segments.  There are three options
  for breaking $H$.  In the first case, $V^*_-$ contains some
  junction, or $E^*_-$ contains some edge of $H$ that is in $M\times
  V_0$. In this case, we can branch on including one of these vertices
  or edges into the solution; there are $O(k^3)$ of them.  Otherwise,
  we need to delete an internal vertex or edge from some segment.  Let
  $d=2k + 4$.  In the second case, we delete either
  \begin{inparaenum}[(1)]
  \item a vertex that is at distance at most $d$ (on the cycle) from a
    junction; or
  \item an edge
  whose both endpoints are at distance at most $d$ (on the cycle) from
  a junction.
  \end{inparaenum}
  In particular, this case must apply when we are breaking a segment
  of length at most $2d$.  If one of the two aforementioned cases is
  correct, then we can identify one vertex or edge of the solution by
  branching.  In total, there are $O(k^4)$ branches we need to try.

  Henceforth, we assume that none of these two cases holds.  We still
  have to delete at least one vertex or edge from $H$; this vertex or
  edge must belong to some segment $[v_s,v_t]$ with $t - s > 2d$.
  This is the third case, where we use $s' = s+ d$ and $t' = t - d$.
  Recall that any segment $[v_s,v_t]$ belongs to some maximal path $P$
  of $H - M$, on which $V_1$ and $V_2$ are well defined.  For any pair
  of indices $i,j$ with $s\le i<i+3\le j\le t$, we use $U_{[i,j]}$ to
  denote the union of the set of bags in the nonempty subtree of $\cT
  - \{K_{\last{i}}, K_{\first{j}}\}$ that contains $\{K_{\last{i}+1},
  \dots, K_{\first{j}-1}\}$, plus the two vertices $v_i$ and $v_j$.  Let
  $G_{[i,j]}$ be the subgraph induced by $U_{[i,j]}$.
  \begin{claim}\label{lem:solution-and-segments}
    There must be some segment $[v_{ s}, v_{t}]$ with $t -s > 2 d$
    such that vertices $v_{s'}$ and $v_{t'}$ are disconnected in $G_{[
      s, t]} - V^*_- - E^*_-$.
  \end{claim}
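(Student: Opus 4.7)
The plan is to argue by contradiction. Suppose that for every segment $[v_s,v_t]$ with $t-s>2d$ inside whose closed arc $[v_{s'},v_{t'}]$ the solution $(V^*_-,E^*_-,E^*_+)$ removes a vertex or edge of $H$, the two endpoints $v_{s'}$ and $v_{t'}$ remain connected in $G_{[s,t]}-V^*_- - E^*_-$ via some path $Q_{[s,t]}$. I would first verify that at least one such ``break segment'' exists: because $|H|\ge k+4>k_3+3$, the cycle $H$ cannot be triangulated by the at most $k_3$ chord additions alone, so the solution must delete some vertex or edge of $V(H)\cup E(H)$. The hypothesis that neither Case 1 nor Case 2 applies forbids such a deletion from being a junction, an edge of $H$ in $M\times V_0$, a vertex within distance $d$ of a junction on the cycle, or an edge both of whose endpoints lie within distance $d$ of a junction. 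Consequently, every deletion on $H$ sits inside the arc $[v_{s'},v_{t'}]$ of some segment of length greater than $2d$, and, by the same case analysis, both buffer sub-paths $v_s\cdots v_{s'}$ and $v_{t'}\cdots v_t$ of every break segment, together with every piece of $H$ that lies outside the closed arcs $[v_{s'_i},v_{t'_i}]$, are left entirely intact by $V^*_-\cup E^*_-$.

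Next, I would stitch together a closed walk $\tilde H$ in $G-V^*_- - E^*_-$ by starting from $H$ and, for each break segment $[v_{s_i},v_{t_i}]$, replacing the sub-arc $v_{s'_i}\cdots v_{t'_i}$ by the detour $Q_{[s_i,t_i]}$. Each $Q_{[s_i,t_i]}$ can be chosen as an induced path of $G$ and, using Lemma~\ref{lem:branch}, routed through $V_2\cup\{v_{s'_i},v_{t'_i}\}$: any branch component on the detour is short-circuited through its clique $V_0$-neighborhood. The key length estimate uses the minimality of $H$ as a shortest hole of $G$: any $v_{s'}$-to-$v_{t'}$ path $Q$ in $G$ satisfies $|Q|\ge t'-s'$, because otherwise combining $Q$ with the ``long way'' of $H$ between $v_{s'}$ and $v_{t'}$ yields a cycle of length $|Q|+|H|-(t'-s')<|H|$, which in turn contains a hole strictly shorter than $H$. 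Summing over break segments gives $|\tilde H|\ge|H|\ge k+4$.

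To conclude, I would reduce $\tilde H$ to a simple cycle $\tilde C\subseteq G-V^*_- - E^*_-$ of length at least $k+4$. Since $|E^*_+|\le k_3\le k<|\tilde C|-3$, the chord additions of the solution are insufficient to triangulate $\tilde C$ entirely, and so $\tilde C$ contains an induced cycle of length at least $4$ in $G'=G-V^*_- - E^*_-+E^*_+$, i.e., a hole; this contradicts the chordality of $G'$ and completes the proof. The main technical obstacle is ensuring that the simple-cycle reduction preserves the length lower bound $k+4$. I would handle this by exploiting that the sets $U_{[s_i,t_i]}$ for distinct break segments are supported on essentially disjoint sub-trees of $\cT$ (overlapping at most at shared junction vertices of the enclosing paths of $H-M$), and by selecting each $Q_{[s_i,t_i]}$ as a shortest induced path subject to avoiding vertices already committed to the walk; any accidental repetition in $\tilde H$ then shortcuts it to a smaller closed walk whose length, by the same minimality argument applied to the reduced detour, is still at least $|H|$.
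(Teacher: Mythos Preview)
Your overall strategy matches the paper's: argue by contradiction, use the assumed $v_{s'}$--$v_{t'}$ connectivity in $G_{[s,t]}-V^*_--E^*_-$ to build a detour for every long segment, and concatenate these detours with the intact pieces of $H$ into a closed walk that should persist as a hole after the edge additions in $E^*_+$. The difficulty is entirely in the final step, and there your argument has a genuine gap.

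The sentence ``Since $|E^*_+|\le k_3\le k<|\tilde C|-3$, the chord additions of the solution are insufficient to triangulate $\tilde C$ entirely'' is only valid if $\tilde C$ is \emph{chordless} in $G-V^*_--E^*_-$. Triangulating a cycle of length $\ell$ needs $\ell-3$ chords, but if $c$ of them are already present in $G-V^*_--E^*_-$ then only $\ell-3-c$ additions are required, and nothing in your argument bounds $c$. Thus establishing $|\tilde C|\ge k+4$ is not enough; you must show $\tilde C$ is a hole. This is precisely what the paper does and what you skip: it verifies that the concatenated walk $C$ is already an induced cycle by arguing that the internal vertices of each detour $P'_{[s,t]}$ lie in the subtree of $\cT$ strictly between $K_{\last{s+3}}$ and $K_{\first{t-3}}$, hence are nonadjacent to the $V_0$-vertices of every other arc of $C$, and are nonadjacent to the $M$-vertices of $H$ because no internal vertex of a segment is a junction of type~(1). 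Once $C$ is known to be a hole, a single long-segment detour already contributes length $>2d>2k_3+4$, and the triangulation count goes through cleanly. Your proposed fix (``selecting each $Q_{[s_i,t_i]}$ as a shortest induced path subject to avoiding vertices already committed to the walk; any accidental repetition \ldots'') addresses only vertex repetitions, not edges between different arcs of $\tilde H$, so it does not establish chordlessness.

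A second, related gap is your length estimate $|Q|\ge t'-s'$ via the shortest-hole property: you assert that $Q$ together with the long arc of $H$ gives ``a cycle of length $<|H|$, which in turn contains a hole strictly shorter than $H$''. But a short cycle need not contain any hole (a $4$-cycle with a chord has none), so this inference also requires the non-adjacency analysis you omitted. The paper avoids this by bounding the length of $P'_{[s,t]}$ through the distance between $v_s$ and $v_t$ \emph{inside} the subgraph $G_{[s,t]}$, rather than by appealing to the minimality of $H$.
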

  \begin{proof}
    We prove by contradiction.  Consider first a segment $[v_s,v_t]$
    with $t - s > 2d$.  Suppose for contradiction that $v_{s'}$ and
    $v_{t'}$ are connected in $G_{[s,t]} - V^*_- - E^*_-$.  We can
    find an induced \stpath{v_{s'}}{v_{t'}} $P_{[s',t']}$ in
    $G_{[s,t]} - V^*_- - E^*_-$, which has to visit every bag $K_\ell$
    with $\last{s'}\le \ell\le \first{t'}$.  Appending to it
    $v_s\cdots v_{s'}$ and $v_s\cdots v_{s'}$, we get a
    \stpath{v_{s}}{v_{t}} $P_{[s,t]}$ in $G_{[s,t]} - V^*_- - E^*_-$.
    %
    From $P_{[s,t]}$ we can extract an induced \stpath{v_{s}}{v_{t}}
    $P'_{[s,t]}$ of $G_{[s,t]} - V^*_- - E^*_-$.  It is also a
    \stpath{v_{s}}{v_{t}} of $G_{[s,t]}$, where the distance between
    $v_s$ and $v_t$ is $t - s > 2d$, and thus the length of
    $P'_{[s,t]}$ is larger than $2d > 2 k_3 + 4$.  On the other hand,
    a segment $[v_s, v_t]$ of length at most $2d$ remains intact in $G
    - V^*_- - E^*_-$ by assumption, which can be used as the
    \stpath{v_{s}}{v_{t}}.

    We have then obtained for each segment $[v_s, v_t]$ of $H$ an
    induced \stpath{v_{s}}{v_{t}} $P'_{[s,t]}$ in $G - V^*_- - E^*_-$.
    Concatenating all these paths, as well as edges of $H$ in $M\times
    V(G)$, we get a closed walk $C$.  To verify that $C$ is a hole, it
    suffices to verify that the internal vertices of $P'_{[s,t]}$ is
    disjoint and nonadjacent to other parts of $C$.  On the one hand,
    no internal vertex of $P'_{[s,t]}$ is adjacent to $M\setminus A_M$
    by definition ($C$ is disjoint from $A$).  On the other hand, all
    internal vertices of $P'_{[s,t]}$ appear in the subtree that
    contains $K_{\last{s+4}}$ in $\cT - \{K_{\last{s+3}},
    K_{\first{t-3}}\}$, while no vertex in the \stpath{v_t}{v_s} in
    $C$ does.  This verifies that $C$ is a hole of $G - V^*_- -
    E^*_-$.  Since the length of $C$ is longer than $2 k_3 + 4$,  it cannot be made chordal by the addition of the at most $k_3$ edges of $E^*_+$. This contradiction proves the claim.
    \renewcommand{\qedsymbol}{$\lrcorner$}
  \end{proof}

  In other words, there is a segment $[v_s,v_t]$ such that ($V^*_-,
  E^*_-$) contains some inclusion-wise minimal mixed
  \stsep{\{v_s,\dots,v_{s'}\}}{\{v_{t'},\dots,v_t\}} ($V^*_S, E^*_S$)
  in $G_{[s,t]}$.  The resulting graph obtained by deleting ($V^*_S,
  E^*_S$)
  from $G_{[s,t]}$ is characterized by the following claim.
  \begin{claim}\label{claim-6}
    Let ($V_S, E_S$) be an inclusion-wise minimal mixed
    \stsep{v_{s'}}{v_{t'}} in $G_{[s',t']}$, and let $G' = G - V_S -
    E_S$.  Let $X$ be the component of $G' - (K_{\last{i}} \cup A)$
    for some $i$ with $s \le i\le s'$ that contains $v_{s'}$.  Then
    $X$ is simplicial in $G'$.
  \end{claim}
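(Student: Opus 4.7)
The plan is to verify both defining conditions of simpliciality for $X$ in $G'$: that $N_{G'}(X)$ induces a clique in $G'$, and that $N_{G'}[X]$ induces a chordal subgraph of $G'$. Since $X$ is a connected component of $G'-(K_{\last{i}}\cup A)$, we have $N_{G'}(X)\subseteq (K_{\last{i}}\cup A)\setminus V_S$. Moreover, $V_S\subseteq U_{[s',t']}\subseteq V_0$ and $E_S\subseteq E(G_{[s',t']})\subseteq V_0\times V_0$, so all deletions are confined to $V_0$, and edges incident to $A$ as well as vertices of $M\setminus A_M$ are intact in $G'$.

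The first structural step is to show $X\subseteq V_0$. I use that every interior vertex $v_{s''}$ of the segment $[v_s,v_t]$ (i.e., $s<s''<t$) fails all four junction conditions. Suppose for contradiction that $X$ contains a vertex $w\in M\setminus A_M$; then some $v'\in V_0\cap X$ is adjacent to $w$ in $G'$. I will argue that $v'$ either is contained in a bag $K_\ell$ of $\cT$ that also contains some interior segment vertex $v_{s''}$---violating condition (1) at $v_{s''}$, since $v'\in K_\ell$ with $v'\sim w\in M\setminus A_M$---or that $v'$ lies in a branch near such a $v_{s''}$, violating condition (2). The fact that $v'$ must be ``close to the segment'' in $\cT$ follows from $X$ being reachable from $v_{s'}$ on the $v_{s'}$-side of $K_{\last{i}}$ and from the clique-tree path property: reaching bags far from the segment would require $\cT(v')$ to traverse bags containing interior segment vertices.

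To show that $N_{G'}(X)$ is a clique in $G'$, I case-split on $u,w\in N_{G'}(X)$. If both lie in $A$, then $uw\in E(G)$ (as $A$ is a clique) and $uw\notin E_S$ (as $E_S$ has no endpoint in $A$). If both lie in $K_{\last{i}}$, then $uw\in E(G)$ (as $K_{\last{i}}$ is a maximal clique of $G_0$); and inclusion-wise minimality of $(V_S,E_S)$ forces any edge of $E_S$ to have its two endpoints in distinct components of $G_{[s',t']}-V_S-E_S$, while both $u$ and $w$ have neighbors in the $v_{s'}$-component $X$ and hence both lie on the $v_{s'}$-side, excluding $uw\in E_S$. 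In the mixed case $u\in K_{\last{i}}\setminus A$ and $w\in A$, the existence of a neighbor of $u$ in $X$ forces $\cT(u)$ to reach a bag of $\cal P$ beyond $K_{\last{i}}$; by Proposition~\ref{lem:v-2}, $u\in V_2$, and this bag contains some interior segment vertex $v_{s''}$. Condition (4) failing at $v_{s''}$ places $u\in N_{V_2}(v_{s''})$, which is completely connected to $A$, so $uw\in E(G)$, and $uw\notin E_S$ as in the first case.

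For chordality of $N_{G'}[X]$, note that $N_{G'}[X]\subseteq V_0\cup A$ and $G'[V_0\cap N_{G'}[X]]$ is a subgraph of the chordal graph $G_0$. Any putative hole of $G'[N_{G'}[X]]$ contains at most two vertices of the clique $N_{G'}(X)$, and these must be consecutive on the hole, leaving at least $|H|-2\ge 2$ vertices in $X\subseteq V_0$. If the hole avoids $A$, it lies entirely in $V_0$ and would be a hole of $G_0$, which is impossible. Otherwise it uses a vertex of $A$ that is completely connected to the $V_2$-vertices of $X$ appearing in the hole, forcing a chord. A Dirac-style identification-of-cliques argument (as in the proof of Lemma~\ref{lem:chordal-protrusion}) then finishes the verification. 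The principal obstacle is the first structural step, $X\subseteq V_0$: inclusion-wise minimality of $(V_S,E_S)$ controls only the mixed separator inside $G_{[s',t']}$, whereas $M\setminus A_M$-vertices are untouched in $G'$, so the argument must combine the non-junction property of every interior vertex of the segment with the clique-tree structure of $G_0$ to preclude any excursion of $X$ through $M\setminus A_M$.
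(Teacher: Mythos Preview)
Your overall strategy matches the paper's: verify that $N_{G'}(X)$ is a clique and that $G'[N_{G'}[X]]$ is chordal. You are more explicit than the paper about the containment $X\subseteq V_0$ (the paper silently assumes $N_{G'}[X]\subseteq V_2\cup\{\text{branches}\}\cup A$), and your case split for the clique part, using junction type~(4) in the mixed $K_{\mathtt{last}(i)}$--$A$ case, is essentially the paper's argument (even though the paper writes ``type~(3)'' at that spot). These parts are fine.

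The chordality argument, however, has a real gap. You assert that $G'[V_0\cap N_{G'}[X]]$ ``is a subgraph of the chordal graph $G_0$'' and that a hole avoiding $A$ ``would be a hole of $G_0$''. This is not justified as stated: $G'$ is obtained from $G$ by deleting the edges of $E_S$, so $G'[V_0\cap N_{G'}[X]]$ is an induced subgraph of $G_0-V_S-E_S$, not of $G_0$, and removing edges from a chordal graph can create holes. The paper closes this gap with a step you invoke only locally (for two vertices inside $K_{\mathtt{last}(i)}$): by inclusion-wise minimality of $(V_S,E_S)$, \emph{no} edge of $E_S$ has both endpoints in $N_{G'}[X]$, hence $G'[N_{G'}[X]]=G[N_{G'}[X]]$ and the entire chordality argument can be carried out in $G$. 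You need this globally, not just for the $K_{\mathtt{last}(i)}$--$K_{\mathtt{last}(i)}$ case.

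Second, once the argument is transferred to $G$, your chord-forcing step ``a vertex of $A$ is completely connected to the $V_2$-vertices of $X$'' does not cover branch vertices: a putative hole in $G[N_{G'}[X]]$ may pass through a branch $C\subseteq V_1\setminus V_2$, and nothing in type~(4) forces $A$ to be adjacent to $C$. Here the paper invokes junction type~(3): every branch near an interior vertex of the segment is simplicial in $G$, so the branch part of $N_{G'}[X]$ cannot lie on a hole. You should incorporate this; without it your chordality argument is incomplete even after the $E_S$ issue is fixed.
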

  \begin{proof}
    By definition, $N_{G'}(X) \subseteq K_{\last{i}} \cup A$ and is a
    clique in $G$; otherwise $v_{i+1}$ must be a junction of type (3),
    which is impossible.  Since ($V_S, E_S$) is inclusion-wise
    minimal, no edge in $E_S$ is induced by $N_{G'}[X]$.  In
    particular, $N_{G'}(X)$ induces the same subgraph in $G$ and $G'$,
    which is a clique.  It remains to show that $N_{G'}[X]$ induces a
    chordal subgraph of $G'$.  A vertex in $N_{G'}[X]$ is either in
    $V_2$, some branch, or $A$.  For every branch $C$ near to some
    vertex $v_i$ with $s< i< t$, $C\cap N_{G'}[X]$ is simplicial.  On
    the other hand, by definition of segments, $V_2\cap N_{G'}[X]$ is
    completely connected to $A$.  Therefore, $N_{G'}[X]$ induces a
    chordal subgraph in $G'$.  \renewcommand{\qedsymbol}{$\lrcorner$}
  \end{proof}

  A symmetric claim holds for the other side of the segment $[v_s,
  v_t]$.  That is, for any $i$ with $t' \le i\le t$, the component
  $X$ of $G' - (K_{\last{i}} \cup A)$ that contains $v_{t'}$ is
  simplicial in $G'$.  We now consider the subgraph obtained from $G$
  by deleting ($V^*_S, E^*_S$), i.e., $G' = G -V^*_S - E^*_S$.  Note
  that ($V^*_-\setminus V^*_S, E^*_-\setminus E^*_S, E^*_+$) is a
  minimum chordal editing set of $G'$.

  \begin{claim}\label{lem:break-segment-2}
    For any mixed \stsep{\{v_s,\dots,v_{s'}\}}{\{v_{t'},\dots,v_t\}}
    ($V^*_S, E^*_S$) of size at most ($|V^*_S|, |E^*_S|$) in
    $G_{[s,t]}$, substituting ($V_S, E_S$) for ($V^*_S, E^*_S$) in
    ($V^*_-, E^*_-, E^*_+$) gives another minimum editing set to $G$.
  \end{claim}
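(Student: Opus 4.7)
The plan is to establish two things: that the size of the new solution is at most $(k_1,k_2,k_3)$, and that the graph obtained by applying it is chordal. The first is immediate, since $|V_S|\le|V^*_S|$ and $|E_S|\le|E^*_S|$ imply $|(V^*_-\setminus V^*_S)\cup V_S|\le |V^*_-|\le k_1$ and similarly for edges, while the additions $E^*_+$ are unchanged. The real work lies in showing that $\widetilde G := G + E^*_+ - ((V^*_-\setminus V^*_S)\cup V_S) - ((E^*_-\setminus E^*_S)\cup E_S)$ is chordal, given that $G^* := G + E^*_+ - V^*_- - E^*_-$ is.

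I would first replace both $(V_S,E_S)$ and $(V^*_S,E^*_S)$ by inclusion-wise minimal sub-separators---this only shrinks the sizes, hence is harmless. I would then apply Claim~\ref{claim-6} (and its symmetric counterpart for the $v_{t'}$-side) to conclude that the two side components $Y_s, Y_t$ obtained by deleting $(V_S,E_S)$ from $G_{[s,t]}$ are simplicial sets with clique neighborhoods in $G - V_S - E_S$; the same holds for the corresponding sides $Y_s^{*}, Y_t^{*}$ of the original separator. Next, using Lemma~\ref{lem:chordal-protrusion} together with the minimality of $(V^*_-, E^*_-, E^*_+)$, I would argue that the ``external'' operations $V^*_-\setminus V^*_S$, $E^*_-\setminus E^*_S$, $E^*_+$ do not edit the interiors of these simplicial sides, except possibly through their clique neighborhoods. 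The key structural input here is that no junctions of types (3) or (4) lie inside the segment, which ensures these clique neighborhoods are preserved under the external edits.

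Finally, I would decompose $\widetilde G$ as the union of a ``core'' subgraph $R$ (everything outside the simplicial sides) together with the two neighborhood-closures $N_{\widetilde G}[Y_s]$ and $N_{\widetilde G}[Y_t]$, glued to $R$ along the cliques $N_{\widetilde G}(Y_s)$ and $N_{\widetilde G}(Y_t)$; the analogous decomposition for $G^*$ shares the same core $R$, which is chordal as an induced subgraph of $G^*$. Dirac's classical result---that the union of two chordal graphs sharing a common clique is chordal---applied iteratively, recovers chordality of $\widetilde G$ from that of $R$ together with the two simplicial sides (each of which is a chordal induced subgraph of $G$ by Claim~\ref{claim-6}). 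The main obstacle is the bookkeeping in the decomposition: one must pinpoint the correct clique along which to glue in $\widetilde G$ and verify that the external operations act identically on the cores of $\widetilde G$ and $G^*$. The segment structure and Lemma~\ref{lem:chordal-protrusion} together make these verifications go through routinely.
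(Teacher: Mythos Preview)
Your overall strategy matches the paper's: use Claim~\ref{claim-6} to identify simplicial pieces on each side of the separator, invoke Lemma~\ref{lem:chordal-protrusion} to argue that the remaining operations stay away from these pieces, and then conclude that no hole can meet them. But there is a genuine gap at the point where you apply Lemma~\ref{lem:chordal-protrusion}. That lemma only guarantees that a minimal editing set avoids a simplicial set $U$ \emph{provided no edge induced by $N(U)$ is deleted}. You assert that ``no junctions of types (3) or (4) lie inside the segment, which ensures these clique neighborhoods are preserved under the external edits,'' but the absence of junctions is a statement about the structure of $G$ (it tells you $N(U)$ \emph{is} a clique); it says nothing about what the solution does. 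Nothing so far prevents $E^*_-\setminus E^*_S$ from containing an edge of $K_{\last{s'}}$ or $K_{\first{t'}}$, and if it does, Lemma~\ref{lem:chordal-protrusion} gives you no conclusion and your common-core decomposition cannot be verified.

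The paper closes this gap with a pigeonhole argument that is the real reason for the choice $d=2k+4$. Since the bags $K_{\last{i}}$ and $K_{\last{i+2}}$ are disjoint (by Proposition~\ref{lem:at-most-3}), an edge of $E^*_-$ can lie in at most one of any two such bags; hence at most $2k_2<d$ of the bags $K_{\last{s}},\dots,K_{\last{s'}}$ contain an edge of $E^*_-$, and some index $s''\in[s,s']$ has $K_{\last{s''}}$ untouched by $E^*_-$ (similarly some $t''\in[t',t]$). Only at those specific indices does Lemma~\ref{lem:chordal-protrusion} apply, yielding that $(V^*_-\setminus V^*_S,\,E^*_-\setminus E^*_S,\,E^*_+)$ does not edit $U_{[s'',t'']}$. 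Once you insert this step, your Dirac-gluing finish is a valid alternative to the paper's slightly more direct ``any hole must meet $U_{[s'',t'']}$, contradiction'' conclusion; but without it the argument does not go through.
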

  \begin{proof}
    We first argue the existence of some vertex $v_{s''}$ with $s\le
    s''\le s'$ such that $E^*_-$ contains no edge induced by
    $K_{\last{s''}}$.  For each $s''$ with $s\le s''\le s'$, since
    $\last{s''}\ge \first{s''+1}$ and every vertex in them is adjacent
    to at most $3$ vertices of $H$ (Proposition~\ref{lem:at-most-3}),
    bags $K_{\last{s''}}$ and $K_{\last{s''+2}}$ are disjoint. In
    particular, an edge cannot be induced by both $K_{\last{s''}}$ and
    $K_{\last{s''+2}}$. Suppose that $E^*_-$ contains an edge induced by
    $K_{\last{s''}}$ for each $s''$ with $s\le s''< s'$, then we must
    have $|E_-|>(s'-s)/2\ge k_2$, which is impossible.  Likewise, we
    have some vertex $v_{t''}$ with $t'\le t''\le t$ such that $E^*_-$
    contains no edge induced by $K_{\first{t''}}$.  By
    Claim~\ref{claim-6}, it follows that every vertex of
    $U_{[s'',t'']}$ is in a simplicial set of $G-V^*_S-E^*_S$.  Since
    ($V^*_- \setminus V^*_S, E^*_- \setminus E^*_S, E^*_+$) is a
    minimum chordal editing set to $G - V^*_S - E^*_S$, we have by
    Lemma~\ref{lem:chordal-protrusion} that ($V^*_- \setminus V^*_S,
    E^*_- \setminus E^*_S, E^*_+$) does not edit any vertex of
    $U_{[s'',t'']}$.

    Suppose that there is a hole $C$ in the graph obtained by applying
    ($(V^*_- \setminus V^*_S) \cup V_S, (E^*_- \setminus E^*_S) \cup
    E_S, E^*_+$) to $G$.  By construction, $C$ contains a vertex of
    $U_{[{s'}, {t'}]}\subseteq U_{[s'',t'']}$. However, by
    Claim~\ref{claim-6}, every vertex of $U_{[s'',t'']}$ is in some
    simplicial set of $G-V_S-E_S$ and, as ($V^*_- \setminus V^*_S,
    E^*_- \setminus E^*_S, E^*_+$) does not edit $U_{[s'',t'']}$,
    every such vertex is in a simplical set after applying ($(V^*_-
    \setminus V^*_S) \cup V_S, (E^*_- \setminus E^*_S) \cup E_S,
    E^*_+$) to $G$. Thus no vertex of $U_{[s'',t'']}$ is on a hole, a
    contradiction.  \renewcommand{\qedsymbol}{$\lrcorner$}
  \end{proof}

  For any segment $[v_s, v_t]$, we can use
  Corollary~\ref{col:alg-mix-separator} to find all possible sizes of
  a minimum mixed \stsep{\{v_s,\dots,v_{s'}\}}{\{v_{t'},\dots,v_t\}}.
  There are at most $k_1$ of them.  By
  Claim~\ref{lem:break-segment-2}, one of them can be used to compose
  a minimum chordal editing set.
  In each iteration, we branch into $O(k^4)$ instances to break a
  hole, and in each branch decreases $k$ by at least $1$.  The runtime
  is thus $O(k)^{4k}\cdot n^{O(1)} = 2^{{O}(k \log{k})}\cdot
  n^{O(1)}$.  This completes the proof.
\end{proof}

\section{Concluding remarks}\label{sec:alg-disjoint-hc}
We have presented the first FPT algorithm for the general modification
problem to a graph class that has infinite number of obstructions.  It
is natural to ask for its parameterized complexity on other related
graph classes, especially for those classes on which every
single-operation version is already known to be FPT.  The most
interesting candidates include unit interval graphs and interval
graphs.  The fixed-parameter tractability of their completion versions
were shown by Kaplan et al.~\cite{kaplan-99-chordal-completion} and
Villanger et al.~\cite{villanger-09-interval-completion}; their vertex
deletion versions were shown by van 't Hof and Villanger
\cite{villanger-13-pivd} and Cao and Marx
\cite{cao-12-interval-deletion}.  A very recent result of
Cao~\cite{cao-14-almost-interval-recognition} complemented them by
showing that the edge deletion versions are FPT as well.

We would like to draw attention to the similarity between {\sc chordal
  deletion} and the classic {\sc feedback vertex set} problem, which
asks for the deletion of at most $k$ vertices to destroy all
\emph{cycles} in a graph, i.e., to make the graph a forest.  The
ostensible relation is that the forbidden induced subgraphs of forests
are precisely all holes and triangles.  But triangles can be easily
disposed of and its nonexistence significantly simplifies the graph
structure.  On the other hand, each component of a chordal graph can
be represented as a clique tree, which gives another way to be
correlate these two problems.

Recall that vertices with degree less than two are irrelevant for {\sc
  feedback vertex set}, while degree two vertices can also be
preprocessed, and thus it suffices to consider graphs with minimum
degree three.  Earlier algorithms for {\sc feedback vertex set} are
based on some variations of the upper bounds of Erd\H{o}s and P\'osa
\cite{erdos-62-number-disjoint-circuits} on the length of shortest
cycles in such a graph.  For {\sc chordal vertex deletion}, our
algorithm can be also interpreted in this way.  First of all, a
simplicial vertex participates in no holes, and thus can be removed
safely.

\begin{kernelrule}\label{rule:free-vertex}
  Remove all simplicial vertices.
\end{kernelrule}
Note that a simplicial vertex corresponds to a leaf in the clique
tree, Reduction~\ref{rule:free-vertex} can be viewed as a
generalization of the disposal of degree-1 vertices for
\textsc{feedback vertex set}.  For \textsc{feedback vertex set}, we
``smoothen'' a degree-2 vertex by removing it and adding a new edge to
connect its two neighbors.  This operation shortens all cycles through
this vertex and result in an equivalent instance.  To have a similar
reduction rule, we need an explicit clique tree,\footnote{This can be
  surely extended to some local clique tree structure, and we use
  clique tree here for simplicity.} so we consider the compression
problem, which, given a hole cover $M$, asks for another hole cover
$M'$ disjoint from $M$.  The following reduction rule will only be
used after Reduction \ref{rule:free-vertex} is not applicable, then no
vertex inside a segment can have a branch.  Let $S_\ell$ denote the
separator $K_\ell\cap K_{\ell+1}$ in the clique tree.

\begin{kernelrule}\label{rule:degree-2-node}
  Let $[v_s,v_t]$ be a segment and $|S_{\imath}| = \min_{\last{s} \leq
    i < \first{t}} |S_{i}|$.  If there exists $S_\ell$ such that
  $S_\ell$ is disjoint from $K_{\last{s}} \cup K_{\first{t}}$ and
  there exists $v\in S_\ell\setminus S_{\imath}$, then remove $v$ and
  insert edges to make $N(v)$ a clique.
\end{kernelrule}

After both reductions are exhaustively applied, we can use an argument
similar as Theorem~\ref{lem:bound-segments} to show that either the
length of a shortest hole is $O(k^4)$ or there is no solution.
However, unlike {\sc feedback vertex set}, Reductions
\ref{rule:free-vertex} and \ref{rule:degree-2-node} do not directly
imply a polynomial kernel for {\sc chordal vertex deletion}.
Therefore, we leave it open the existence of polynomial kernels for
the {\sc chordal vertex deletion} problem and its compression
variation.

\end{document}